\definecolor{dblue}{rgb}{0, 0, 0.72}
\numberwithin{equation}{section}
\theoremstyle{definition}
\newtheorem{question*}{Question}
\newcommand{\C}{{\mathbb C}}
\newcommand{\ten}{\otimes}
\newcommand{\pl}{\hspace{.1cm}}
\newcommand{\al}{\alpha}
\newcommand{\si}{\sigma}
\newcommand{\la}{\lambda}
\newcommand{\eps}{\varepsilon}
\newcommand{\id}{\iota_{\infty,2}^n}
\newcommand{\ha}{{\mathcal H}}
\newcommand{\F}{{\mathcal F}}
\newcommand{\E}{{\mathcal E}}
\newcommand{\A}{{\mathcal A}}
\newcommand{\B}{{\mathcal{B}}}
\newcommand{\M}{{\mathcal M}}
\newcommand{\cH}{{\mathcal H}}
\newcommand{\cK}{{\mathcal K}}
\newcommand{\cF}{{\mathcal F}}
\newcommand{\cS}{{\mathcal S}}
\newcommand{\K}{{\mathcal K}}
\renewcommand{\S}{{\mathcal S}}
\newcommand{\Ha}{{\mathcal H}}
\newcommand{\tr}{\operatorname{tr}}
\newcommand{\N}{{\mathcal N}}
\newcommand{\I}{{\mathcal I}}
\newcommand{\norm}[2]{\parallel \! #1 \! \parallel_{#2}}
\newcommand{\bra}[1]{\langle{#1}|}
\newcommand{\ket}[1]{|{#1}\rangle}
\newcommand{\ketbra}[1]{|{#1}\rangle\langle{#1}|}
\newcommand{\qd}{\end{proof}\vspace{0.5ex}}
\newcommand{\prf}{\begin{proof}[\bf Proof:]}
\newcommand{\xspace}{\hbox{\kern-2.5pt}}
\renewcommand{\id}{\operatorname{id}}
\renewcommand{\eps}{\varepsilon}
\renewcommand{\al}{\alpha}
\renewcommand{\E}{{\mathcal{E}}}
\newcommand{\Ka}{{\mathcal K}}
\def\ket#1{| #1 \rangle}
\def\bra#1{\langle #1 |}
\theoremstyle{plain}
\newtheorem{theorem}{Theorem}
\newtheorem{lemma}[theorem]{Lemma}
\newtheorem{corollary}[theorem]{Corollary}
\newtheorem{prop}[theorem]{Proposition}
\newtheorem*{conjecture*}{Conjecture}
\newtheorem{remark}[theorem]{Remark}
\theoremstyle{definition}
\newtheorem{definition}[theorem]{Definition}
\newtheorem{exam}[theorem]{Example}
\begin{document}

\title{Generalized Stein's lemma and asymptotic equipartition property for subalgebra entropies}

\author{Li Gao}
\affiliation{School of Mathematics and Statistics, Wuhan University, Wuhan, Hubei 430075, China\\ Wuhan Institute of Quantum Technology, 
Wuhan, Hubei 430075, China}
 \email{gao.li@whu.edu.cn}
 \author{Mizanur Rahaman}
 \affiliation{Univ Lyon, Inria, ENS Lyon, UCBL, LIP, F-69342, Lyon Cedex 07, France\\
 Wallenberg Centre for Quantum Technology, Chalmers University of Technology, \\
Department of Mathematical Sciences, Chalmers University of Technology}
 \email{mizanurr@chalmers.se}


\begin{abstract}The quantum Stein’s lemma is a fundamental result of quantum hypothesis testing in the context of distinguishing two quantum states. The ``generalized quantum Stein's lemma'' asserts that this result is true in a general framework where one of the states is replaced by convex sets of quantum states. Formulated in 2008, the generalized Stein's lemma is one of the most influential results in quantum information theory that links resource convertibility to quantum hypothesis testing. However, in 2023 a logical gap was found in the original proof of the generalized Stein's lemma and since then there has been an enormous effort in resolving this issue. 
In this work, we show that the assertion of the generalized Stein's lemma is true for the setting where the second hypothesis is the state space of any finite-dimensional subalgebra $\N$. This is obtained through a strong asymptotic equipartition property for smooth subalgebra entropies that applies to any fixed smoothing parameter $\eps\in (0,1)$. In fact, we obtain a stronger second-order analysis for the hypothesis-testing relative entropy in this setting. As an application in resource theory, we show that the relative entropy of a subalgebra is the asymptotic dilution cost under suitable operations. This provides a possible route to establishing a connection between different quantum resources based on subalgebras. After finishing this article, we learned that there are two independent works (\cite{Hayashi-yamasaki, Lami-gqsl}) that finally resolve the generalized Stein's lemma in its full generality. However, we provide an alternative proof in a special case using different operator-algebraic techniques that may be of independent interest.


\end{abstract}
\maketitle
\section{Introduction}

Hypothesis testing is one of the most fundamental tasks in information theory. It involves making decisions based on experimental data from random variables, using tools and techniques from statistics and probability theory. In the setting of quantum information, and in particular in the context of quantum state discrimination, hypothesis testing is deeply connected with many important tasks of quantum information processing (\cite{HN, Loy, brandao-plenio}).

When identifying whether a given quantum state is one of the two possibilities, state $\rho$ or state $\sigma$, one can make two types of errors: the observer finds that the state is $\sigma$, when in reality it was $\rho$ (type $\mathrm{I}$ error), and the other type of error where the observer finds the state to be $\rho$, when it was  actually $\sigma$ (type $\mathrm{II}$ error). In the setting of \textbf{asymmetric hypothesis testing} one defines the optimized type $\mathrm{II}$ error $\beta_{\eps}(\rho\|\sigma)$ as the minimum probability of type $\mathrm{II}$ error, while requiring that the probability of type $\mathrm{I}$  error is bounded by a small parameter $\eps$. The celebrated \textbf{quantum Stein's lemma}  (\cite{HP, ON} )
asserts that when the source emits i.i.d. copies of one of two quantum states $\rho$ and $\sigma$, then for any $0<\eps <1$,
\[\beta_{\eps}(\rho^{\otimes n}\|\sigma^{\otimes n}) \sim 2^{-n D(\rho\|\sigma)}, \  \  \  \text{as} \ \ n\to \infty, \]
where $D(\rho\|\sigma)=\tr(\rho(\log \rho-\log \sigma))$ is the quantum relative entropy of $\rho$ and $\sigma$.

In a more intricate situation known as composite hypothesis testing (\cite{BBH, BG, brandao-plenio}), instead of two quantum states $\rho$ and $\sigma$,  one wishes to distinguish between a state $\rho$ and a whole family of quantum states. For example, to test whether a quantum device produces entanglement, one would like to distinguish a bipartite state against all the separable states. In the context of composite hypothesis testing, when the family of quantum states fulfills certain axioms motivated by the framework of resource theories, Brand\~{a}o and Plenio (\cite{brandao-plenio}) claimed to prove a \textbf{generalized Stein's lemma} showing that the asymptotic error exponent equals the regularized relative entropy of $\rho$ with respect to the given family of quantum states (for the exact description see \cite{brandao-plenio} and section \ref{subsection-1}). This seminal result was of fundamental importance as it established that any reasonable resource theory (which satisfies Axioms 1–5 listed in section \ref{subsection-1}) can be reversibly manipulated.

However, a gap had been found in the claimed proof of the generalized quantum Stein’s lemma (\cite{gen-stein, gap-2}), which put the main results of Brand\~{a}o and Plenio’s work into question. This also had put into question many results in the literature that were directly or indirectly dependent on this fact, in particular the reversibility of quantum entanglement and reversibility of general quantum resources under asymptotically resource non-generating operations (\cite{brandao-plenio, BP-2, BG}). After finishing this article, we learned that Yamasaki and Kuroiwa \cite{YK} announced a proof for the generalized Stein's lemma. However, there seems to be some issues regarding their proof as explained in the latest version of their article. In a recent breakthrough, the generalized quantum Stein's lemma has been independently proved by Hayashi and Yamasaki (\cite{Hayashi-yamasaki}) and Lami (\cite{Lami-gqsl}).

Before the recent proof of the Stein's lemma appeared, one of the few special cases where the generalized Stein's lemma was known to hold true is the case of quantum coherence, where the second hypothesis consists of incoherent states that are diagonal in a fixed orthonormal basis $\{\ket{i} \}$ of $\mathbb{C}^m$,
\[\mathcal{I}_n:=\operatorname{conv}\Big\{\bigotimes_{k=1}^n \ket{j_k}\bra{j_k}: j_k\in\{1,\ldots,m\}\Big\}.\] In this scenario, by the series of works of Winter-Yang \cite{Winter-Yang}, Zhao et al. (\cite{dilution, distillation}), Chitambar (\cite{Chitambar}), Regula et al. (\cite{regula}), the asymptotic error exponent of testing the i.i.d. copies $\rho^{\otimes n}$ of a fixed state $\rho\in B(\mathbb{C}^m)$ against the sets $\mathcal{I}_{n}$ is proven to be the relative entropy of coherence $C(\rho)=\inf_{\sigma\in \mathcal{I}}D(\rho||\sigma)$. Also, an earlier work of Hayashi-Tomamichel (\cite{HT}) proved the generalized Stein's lemma for a particular subalgebra $\N=1_A\otimes B(\Ha_B)\subset B(\Ha_A\otimes \Ha_B)$.

\subsection{Our Contributions:}
In this work, we establish the generalized Stein's lemma for the state spaces of an arbitrary von Neumann algebra acting on a finite-dimensional Hilbert space. In fact, we prove a stronger result that works for any fixed error parameter $\eps\in (0,1)$, instead of letting $\eps \to 0$.


\begin{theorem}{(Generalized Stein's lemma for subalgebras)}\label{thm:1} Let $\N\subseteq B(\Ha)$ be a von Neumann subalgebra. Then for any $\rho\in B(\Ha)$ and $\eps \in (0,1)$, we have 
\[\lim_{n\to \infty}\frac{1}{n} D_{H}^{\eps}(\rho^{\otimes n}\| \N^{\otimes n})= D(\rho\|\N).\]
\end{theorem}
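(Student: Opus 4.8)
The plan is to split the claim into the two matching asymptotic bounds, treating the ``$\le$'' direction as a quick consequence of the ordinary Stein's lemma and reserving the real work for the ``$\ge$'' (achievability) direction, where the supremum over the entire state space of $\N^{\otimes n}$ must be controlled. Throughout I would write $E_\N\colon B(\Ha)\to\N$ for the trace-preserving conditional expectation, set $\sigma_0:=E_\N(\rho)$, and use the standard fact that $D(\rho\|\N)=\inf_{\sigma\in\S(\N)}D(\rho\|\sigma)=D(\rho\|\sigma_0)$, with the infimum attained at $\sigma_0$. Since $E_\N$ tensorizes, $E_{\N^{\otimes n}}=E_\N^{\otimes n}$ and $\sigma_0^{\otimes n}=E_\N^{\otimes n}(\rho^{\otimes n})\in\S(\N^{\otimes n})$. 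For the converse, because $\sigma_0^{\otimes n}$ is a single state in $\S(\N^{\otimes n})$, testing against the whole set is at least as easy as testing against it, so $D_H^\eps(\rho^{\otimes n}\|\N^{\otimes n})\le D_H^\eps(\rho^{\otimes n}\|\sigma_0^{\otimes n})$; the ordinary quantum Stein's lemma for the pair $(\rho,\sigma_0)$ then gives $\tfrac1n D_H^\eps(\rho^{\otimes n}\|\sigma_0^{\otimes n})\to D(\rho\|\sigma_0)=D(\rho\|\N)$, whence $\limsup_n \tfrac1n D_H^\eps(\rho^{\otimes n}\|\N^{\otimes n})\le D(\rho\|\N)$.

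The central step is a reformulation of the composite type-II error. For any test $0\le M\le I$ and any $\sigma\in\S(\N^{\otimes n})$, self-adjointness of the conditional expectation together with $E_\N^{\otimes n}(\sigma)=\sigma$ gives $\tr(M\sigma)=\tr(E_\N^{\otimes n}(M)\sigma)$; since $E_\N^{\otimes n}(M)$ is a positive element of $\N^{\otimes n}$ whose top spectral projection again lies in $\N^{\otimes n}$, maximizing over $\sigma$ yields
\[
\sup_{\sigma\in\S(\N^{\otimes n})}\tr(M\sigma)=\|E_\N^{\otimes n}(M)\|_\infty .
\]
Thus the composite type-II error equals $\inf\{\|E_\N^{\otimes n}(M)\|_\infty: 0\le M\le I,\ \tr(M\rho^{\otimes n})\ge 1-\eps\}$, converting the adversarial supremum into a single operator norm.

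For achievability I would then construct an admissible test with small $\|E_\N^{\otimes n}(M)\|_\infty$. A naive Neyman--Pearson test between $\rho^{\otimes n}$ and $\sigma_0^{\otimes n}$ need not work, since the operator norm is far more rigid than a single-state trace; the remedy is to pinch. Let $\mathcal P$ be the pinching onto the eigenspaces of $\sigma_0^{\otimes n}$. Because $\sigma_0\in\N$, these eigenprojections lie in $\N^{\otimes n}$, so $\mathcal P$ commutes with $E_\N^{\otimes n}$ and, writing $\bar\rho_n:=\mathcal P(\rho^{\otimes n})$, one gets $E_\N^{\otimes n}(\bar\rho_n)=\mathcal P(\sigma_0^{\otimes n})=\sigma_0^{\otimes n}$. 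Take the (now commuting) test $M:=\{\bar\rho_n\ge 2^{n\gamma}\sigma_0^{\otimes n}\}$ for a fixed $\gamma<D(\rho\|\N)$. On its range $M\le 2^{-n\gamma}(\sigma_0^{\otimes n})^{-1/2}\bar\rho_n(\sigma_0^{\otimes n})^{-1/2}$, and applying $E_\N^{\otimes n}$ with its $\N^{\otimes n}$-module property collapses the right-hand side to $2^{-n\gamma}$ times the support projection of $\sigma_0^{\otimes n}$, so $\|E_\N^{\otimes n}(M)\|_\infty\le 2^{-n\gamma}$. Finally $\tr(M\rho^{\otimes n})=\tr(M\bar\rho_n)\to 1$ for every $\gamma<D(\rho\|\sigma_0)$ by the direct part of the ordinary Stein's lemma (Hiai--Petz), using $\tfrac1n D(\bar\rho_n\|\sigma_0^{\otimes n})\to D(\rho\|\sigma_0)$ and the polynomial pinching bound $\rho^{\otimes n}\le v_n\bar\rho_n$ with $v_n=\mathrm{poly}(n)$. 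Hence $M$ is admissible for any fixed $\eps\in(0,1)$ and large $n$, giving $\tfrac1n D_H^\eps(\rho^{\otimes n}\|\N^{\otimes n})\ge\gamma$; letting $\gamma\uparrow D(\rho\|\N)$ completes the achievability and hence the theorem.

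The hard part, and the place where the subalgebra structure is indispensable, is the passage from the adversarial supremum to the operator norm $\|E_\N^{\otimes n}(M)\|_\infty$ together with the observation that pinching in a basis drawn from $\N^{\otimes n}$ makes this norm controllable while keeping the type-I error negligible. The identity $E_\N^{\otimes n}(\bar\rho_n)=\sigma_0^{\otimes n}$ is precisely what forces the worst-case subalgebra state to behave like the single state $\sigma_0^{\otimes n}$ in the exponent, which is exactly the mechanism that is unavailable for a general convex set of free states and where the known gap in the general conjecture arises.
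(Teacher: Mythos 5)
Your proposal is correct in its essentials, but it takes a genuinely different route from the paper. The paper never constructs a test directly: it proves a dilation theorem identifying $D_{\max}^{\eps}(\rho^{\otimes n}\|\N^{\otimes n})$ with $-H_{\min}^{\eps}(E^n|A^n)$ for the state $(V\rho V^*)^{\otimes n}$ (via the Stinespring isometry $V$ of $\E_\N$ and a multiplicative-domain lemma), imports the known strong AEP for conditional min-entropy from Tomamichel's book, and then transfers the result to $D_H^{\eps}$ through the inequality $D_{\max}^{\sqrt{1-\eps}}(\rho\|\sigma)\le D_H^{\eps}(\rho\|\sigma)+\log\frac{1}{1-\eps}$; the converse direction is handled, as in your argument, by comparing against the single state $\E_\N(\rho)^{\otimes n}$. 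Your achievability argument instead reduces the composite alternative to a single state by hand: the identity $\sup_{\sigma\in\S(\N^{\otimes n})}\tr(M\sigma)=\|\E_\N^{\otimes n}(M)\|_\infty$, the commutation of $\E_\N^{\otimes n}$ with the pinching onto the spectrum of $\sigma_0^{\otimes n}$, and the bimodule property $\E_\N^{\otimes n}(aXb)=a\,\E_\N^{\otimes n}(X)\,b$ for $a,b\in\N^{\otimes n}$ are all valid and do exactly the work that the paper's multiplicative-domain/dilation machinery does. What each approach buys: yours is more self-contained and operational for the hypothesis-testing statement alone, while the paper's route simultaneously yields the full AEP for $D_{\max}^{\eps}$ and $D_{\min}^{\eps}$ (Theorem \ref{main-thm-2}) and the R\'enyi identities, which are then reused for the coherence-cost results in Section \ref{resource theory}; both approaches ultimately hinge on the same algebraic fact that the conditional expectation is an $\N$-bimodule map.

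Two points you should tighten. First, define $M$ as the projection onto the strictly positive part of $\bar\rho_n-2^{n\gamma}\sigma_0^{\otimes n}$ (or intersect with $\operatorname{supp}(\sigma_0^{\otimes n})$); otherwise the kernel of $\sigma_0^{\otimes n}$ could sit inside $M$ and spoil the bound $\|\E_\N^{\otimes n}(M)\|_\infty\le 2^{-n\gamma}$ --- harmless here because $\operatorname{supp}(\rho)\le\operatorname{supp}(\E_\N(\rho))$ by the Pimsner--Popa inequality, so the type-I error is unaffected. Second, the convergence $\tfrac1n D(\bar\rho_n\|\sigma_0^{\otimes n})\to D(\rho\|\sigma_0)$ alone does not force $\tr(M\bar\rho_n)\to 1$; you need the concentration of the log-likelihood ratio, which in the standard Hiai--Petz argument comes from the Petz--R\'enyi divergences $D_{1-s}$ together with the pinching bound $\rho^{\otimes n}\le v_n\bar\rho_n$. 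Since that is precisely the content of the direct part you cite, the citation carries the step, but the parenthetical justification as written is slightly too weak.
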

We refer to the preliminary section \ref{subsection-1} for the precise definition of hypothesis testing relative entropy $D_H^\eps$. This result extends the generalized Stein's lemma beyond the setting of coherence. Note that the above statement implies the strong converse in the sense that it applies to any fixed error parameter $\eps\in (0,1)$, instead of vanishing error $\eps \to 0$. 

\begin{remark}
As mentioned earlier, after finishing this article, we learned that the generalized quantum Stein's lemma has been independently proven by Hayashi and Yamasaki (\cite{Hayashi-yamasaki}) and Lami (\cite{Lami-gqsl}). Despite these developments, we believe that our work provides a different, algebraic approach for the case of subalgebras which can be of independent interest. The novelty of our approach also lies in the fact that we can actually derive a second-order asymptotic analysis for the smoothed max-relative entropy as well as the hypothesis testing relative entropy for the subalgebra setting (see section \ref{subsec:second order}). It should be noted that the papers (\cite{Hayashi-yamasaki, Lami-gqsl}) that resolve Stein's lemma conjecture for general convex sets do not have this feature.  
\end{remark}

Our key ingredient is a dilation theorem (Theorem \ref{theorem: duality}) connecting smoothed subalgebra entropies with smoothed conditional entropies. In addition to establishing the generalized Stein's lemma in the setting of subalgebras, we also explore the resource theory in this scenario. In the context of quantum coherence, free or incoherent states are those diagonal in a priori fixed computational basis. We introduce a notion of generalized coherence with respect to any given subalgebra of a system, where the incoherent states are defined as the states from the given subalgebra of the system. Two fundamental problems in resource theory are dilution and distillation of resourceful states. In this work, we show that the cost of diluting a state $\rho$ from the maximally incoherent state under suitable incoherent operations is given by the subalgebra relative entropy $D(\rho||\N)$. More precisely, we prove the following

\begin{theorem}[Informal]\label{thm 3}
Let $\N\subseteq B(\Ha)$ be a subalgebra. The asymptotic $\N$-coherent cost for a state $\rho$ under maximally incoherent operations is given by $D(\rho\|\N)$.
\end{theorem}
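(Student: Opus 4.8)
The plan is to realize the asymptotic $\N$-coherent cost as the regularization of a one-shot smoothed max-relative entropy and then evaluate that regularization through the dilation theorem (Theorem \ref{thm:2}) together with the quantum asymptotic equipartition property. First I would fix a unit resource state $\phi$, a maximally coherent reference state normalized so that it carries one unit of coherence, i.e. $D(\phi\|\N_{\mathrm{ref}})=1$, and define the $\N$-coherent cost $C_c(\rho)$ as the infimum of rates $R$ for which there exist maximal incoherent operations $\Lambda_n$ with $\|\Lambda_n(\phi^{\otimes \lceil Rn\rceil})-\rho^{\otimes n}\|_1\to 0$. The resource-theoretic input, in the spirit of \cite{zhao-17} and \cite{dilution}, is the one-shot dilution characterization under maximal incoherent operations: the minimal cost of preparing $\rho$ up to trace-distance error $\eps$ equals $D_{\max}^{\eps}(\rho\|\N)$ up to additive corrections that vanish under regularization. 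This reduces the theorem to computing $\lim_{\eps\to 0}\limsup_n \tfrac1n D_{\max}^{\eps}(\rho^{\otimes n}\|\N^{\otimes n})$.

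For the achievability bound $C_c(\rho)\le D(\rho\|\N)$, I would evaluate this regularized quantity using Theorem \ref{thm:2}. Since the Stinespring dilation of $\E^{\otimes n}\colon B(\Ha_A^{\otimes n})\to \N^{\otimes n}$ is $V^{\otimes n}$ and $(V\rho V^*)^{\otimes n}=V^{\otimes n}\rho^{\otimes n}(V^*)^{\otimes n}$, part (i) identifies
\[
D_{\max}^{\eps}(\rho^{\otimes n}\|\N^{\otimes n}) = -H_{\min}^{\eps}(E^n|A^n)_{(V\rho V^*)^{\otimes n}}.
\]
The quantum asymptotic equipartition property for the conditional min-entropy then gives, for every fixed $\eps\in(0,1)$,
\[
\lim_{n\to\infty}\tfrac1n H_{\min}^{\eps}(E^n|A^n)_{(V\rho V^*)^{\otimes n}} = H(E|A)_{V\rho V^*} = -D(\rho\|\N),
\]
where the last equality is the $\al=1$ case of Theorem \ref{thm:2}(ii). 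Hence $\lim_n \tfrac1n D_{\max}^{\eps}(\rho^{\otimes n}\|\N^{\otimes n}) = D(\rho\|\N)$ for each fixed $\eps$, which is precisely the strong AEP for smoothed subalgebra entropies, and it yields the upper bound on the cost.

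For the converse $C_c(\rho)\ge D(\rho\|\N)$, I would invoke data processing of the hypothesis-testing relative entropy under maximal incoherent operations, which by definition map the state space of $\N^{\otimes n}$ into itself. If rate $R$ is achievable, then applying monotonicity to $\Lambda_n$, absorbing the approximation error into the smoothing by continuity, and using that $m$ copies of the unit resource satisfy $D_H^{\eps}(\phi^{\otimes m}\|\N_{\mathrm{ref}}^{\otimes m})\le m + o(m)$, one obtains
\[
D_H^{\eps}(\rho^{\otimes n}\|\N^{\otimes n}) \le D_H^{\eps}\big(\phi^{\otimes \lceil Rn\rceil}\big\|\N_{\mathrm{ref}}^{\otimes \lceil Rn\rceil}\big) + o(n) \le Rn + o(n).
\]
Dividing by $n$, letting $n\to\infty$, and invoking the generalized Stein's lemma (Theorem \ref{thm:1}) on the left forces $D(\rho\|\N)\le R$, so the cost is at least $D(\rho\|\N)$.

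The main obstacle I anticipate is establishing the exact one-shot dilution characterization in the subalgebra setting and constructing the explicit maximal incoherent operation achieving cost $D_{\max}^{\eps}(\rho\|\N)$ with controlled error: one must verify that the operations built from the conditional expectation $\E$ and its dilation $V$ are genuinely free, i.e. preserve the state space of $\N$, and that the smoothing defining $D_{\max}^{\eps}$ translates precisely into trace-distance error of the prepared state. Once the one-shot bound is secured, the asymptotic statement becomes a clean consequence of Theorem \ref{thm:2} and the standard conditional-entropy AEP, with the tensorization of the dilation $V^{\otimes n}$ being the only compatibility point requiring care.
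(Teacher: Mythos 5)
Your overall architecture --- reduce the asymptotic cost to a one-shot characterization in terms of $D_{\max}^{\eps}(\cdot\|\N)$, then evaluate the regularization via the dilation theorem and the conditional-entropy AEP --- is exactly the paper's strategy, and your treatment of the AEP step (tensorization of $V$, Theorem \ref{thm:2}(i) for $D_{\max}^{\eps}$, Theorem \ref{thm:2}(ii) at $\al=1$ to identify $H(E|A)_{V\rho V^*}=-D(\rho\|\N)$) matches Theorem \ref{max-AEP}. Your converse is routed slightly differently --- through monotonicity of $D_H^{\eps}$ and Theorem \ref{thm:1} rather than through monotonicity of $D_{\max}$ as in the paper's proof of Theorem \ref{thm 1: dilution} --- but both are valid; the paper's version has the advantage of giving a genuinely one-shot lower bound $D_{\max}^{\eps}(\rho\|\N)\le C_{\N,MIO}^{\eps}(\rho)$ rather than only an asymptotic one.

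The genuine gap is the achievability half of the one-shot dilution characterization, which you import as a ``resource-theoretic input in the spirit of'' the coherence literature and then flag as your main anticipated obstacle. That step is the actual content of the theorem beyond the AEP, and the cited coherence results do not cover a general subalgebra $\N$: one must exhibit, for any $\rho'$ with $D_{\max}(\rho'\|\sigma)=\log\mu$ and $\sigma\in\S(\N)$, an explicit $\M$--$\N$ MIO channel consuming a maximally coherent resource of index $\lceil\mu\rceil$ and outputting $\rho'$ exactly. The paper does this with the map
\[
\Phi(x)=\tfrac{n}{n-1}\bigl(1-\tr(e_\M x)\bigr)\bigl(\sigma-\tfrac{1}{n}\rho'\bigr)+\tr(e_\M x)\,\rho',\qquad n=\lceil\mu\rceil,
\]
verifying complete positivity from $n\sigma\ge\rho'$ and the MIO property from $\tr(e_\M\gamma)=\tfrac1n$ for every $\gamma\in\S(\M)$, which forces $\Phi(\S(\M))=\{\sigma\}\subset\S(\N)$. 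Without this construction (or an equivalent one) your upper bound on the cost is not established. Note also a definitional mismatch you should reconcile: the paper measures the one-shot cost by the Pimsner--Popa index $\log\lambda_\M^{-1}$ of an auxiliary subalgebra and uses fidelity error, whereas you count copies of a fixed unit resource state with trace-distance error; these agree asymptotically but the translation (in particular that diluting from the standard diagonal maximally coherent state is optimal even for non-commutative target algebras $\N$) is part of what the explicit construction certifies.
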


We refer to section \ref{resource theory} for the precise meaning of these quantities.


The rest of the paper is organized as follows. Section 2 reviews the definitions of various entropic quantities, the statement of generalized Stein's lemma, subalgebra entropies as well as the quantum asymptotic equipartition property.  In Section 3, we establish the dilation Theorem \ref{theorem: duality}, and based on that, prove Theorem \ref{thm:1} (in fact, we prove a stronger statement, see Theorem \ref{max-AEP}). Section 4 is devoted to the resource theory of subalgebra coherence where we obtain the asymptotic dilution cost. We conclude the paper with a discussion on several further questions and directions.\\


\noindent {\bf Acknowledgement.}

We thank Bjarne Bergh for helpful discussion regarding the second-order asymptotics of smooth min-conditional entropy. We would like to thank the anonymous referees for carefully reading the manuscript and pointing
out several typos and also suggesting numerous valuable modifications that have
improved the quality and exposition of this work.

LG was partially supported by NSF grant DMS-2154903. This project has received funding from the European Union’s Horizon research and innovation programme under the Marie Sk\l{}odowska-Curie Grant Agreement No. HORIZON-MSCA-2022-PF-01 (Project number: 101108117). This work was partially supported by the Wallenberg Centre for Quantum Technology (WACQT) funded by Knut and Alice Wallenberg Foundation (KAW).
\section{Definitions, notations and Preliminaries }\label{subsection-1}
\subsection{Relative entropies}
Throughout the paper, we consider $\cH,\cK$ to be finite-dimensional Hilbert spaces and denote by $B(\cH)$ the set of linear operators on $\cH$. The standard matrix trace on $B(\cH)$ is denoted as $\tr$. We say $\rho\in B(\Ha)$ is a state (resp. substate) if $\rho$ is positive semidefinite and $\tr(\rho)=1$ (resp. $\tr(\rho)\le 1$). The set of all states (resp. substates) is denoted as $S(\cH)$ (resp. $S_\le (\cH)$).

Let $\rho$ be a substate and $\sigma$ be a positive semidefinite operator in $B(\cH)$ with support projections satisfying $\text{supp}(\rho)\le \text{supp}(\si)$. We recall the definitions of the following relative entropies
\begin{enumerate}
\item[i)] {\bf Relative entropy}: \[D(\rho||\si):=\tr(\rho\log \rho-\rho\log \si)\]

\item[ii)] {\bf Sandwiched R\'enyi relative entropy}: for $\alpha\in [1/2,1)\cup (1,\infty]$, 
\[{D}_\al(\rho||\si):=\frac{1}{\al-1}\log \tr[({\si^{\frac{1-\al}{2\al}}\rho \si^{\frac{1-\al}{2\al}}})^{\al}]. \]

 \item[iii)] {\bf Max relative entropy}:
 \begin{align*}
D_{\max}(\rho||\si)&:=D_{\infty}(\rho||\si)\\
&=\log \inf \{ \pl \la>0 \pl  | \pl \rho \le \la \si \pl \}.
\end{align*}
\item[iv)] {\bf Min relative entropy}: \begin{align*}
D_{\min}(\rho||\si)&:=D_{1/2}(\rho||\si)\\
&=-2\log \tr|\rho^{1/2}\si^{1/2}|\\
&=-\log F(\rho,\si)
\end{align*}
\end{enumerate}
$[v)$ {\bf Hypothesis-testing relative entropy}:
\noindent\begin{align*}D_{H}^\eps(\rho\|\sigma):=-\log\inf\{ \tr(Q\sigma) \pl  | \pl 0\le Q\le 1\pl, \\ \tr(Q\rho)\ge 1-\eps \}. 
\end{align*}

If $\text{supp}(\rho)\nleqslant \text{supp}(\si)$, $D$ and $D_{\max}$ are defined as $+\infty$.
All the logarithms above and in the following are in base $2$. Recall that the purified distance between two substates $\rho $ and $\sigma$ is
\begin{align*}&d(\rho,\sigma)= \sqrt{1-F_*(\rho,\sigma)^2}\pl,
\pl F_*(\rho,\sigma)=\tr(|\sqrt{\rho}\sqrt{\sigma}|)+\sqrt{(1-\tr(\rho))(1-\tr(\sigma))}\pl, \label{eq:pd}\end{align*}
where $F_*(\rho,\sigma)$ is the generalized fidelity which coincides with the fidelity when $\rho$ or $\sigma$ is a state. See Section 3.3 in \cite{tomamichel-book} for more information about purified distance and generalized fidelity.   

For two substates $\rho$ and $\rho'$, we write $\rho'\sim^{\eps} \rho$ if $d(\rho',\rho)\le \eps$. For $0\le \eps\le 1$, we consider the smoothed version of max/min relative entropy,
\begin{enumerate}
\item[vi)] {\bf Smooth max-relative entropy}: \[D_{\max}^\eps(\rho\|\sigma)=\inf_{\rho'\sim^{\eps}\rho}D_{\max}(\rho'\|\sigma),\]
\item[vii)] {\bf Smooth min-relative entropy}: \[D_{\min}^\eps(\rho\|\sigma)=\sup_{\rho'\sim^{\eps}\rho}D_{\min}(\rho'\|\sigma),\]
\end{enumerate}
where the infimum/supremum is over all substate $\rho'$ that is $\eps$-close to $\rho$ in purified distance.

Imagine a source generating several i.i.d. copies of one of two quantum states $\rho, \sigma$, and our task is to decide which one of them is being produced. In order to learn the identity of the state, the observer measures a two-outcome POVM $\{Q_n, 1-Q_n\}$, given $n$ realizations of the unknown state. The state $\rho$ is seen as the null hypothesis, while the state $\sigma$ as the alternative hypothesis. There are two types of errors that one can make:
\begin{itemize}
 \item Type $\mathrm{I}:$ Observer thinks it is $\sigma$, when it is actually $\rho$. This happens with probability $\alpha_n(Q_n)=\tr(\rho^{\otimes n} (1-Q_n))$.

 \vspace{10pt}

 \item Type $\mathrm{II}:$ Observer thinks it is $\rho$, when it is actually $\sigma$. This happens with probability $\beta_n(Q_n)=\tr(\sigma^{\otimes n} (Q_n))$.
\end{itemize}
In the setting of \textbf{asymmetric hypothesis testing}, the probability of type $\mathrm{II}$ error should be minimized, while only requiring that the probability of type $\mathrm{I}$ error is bounded by a small parameter $\eps$. That is,
\[\beta_n(\eps)=\min_{0\leq Q_n\leq 1} \{\beta_n(Q_n): \alpha_n(Q_n)\leq \eps\}.\]

Note that this quantity is closely related to the {\bf hypothesis testing relative entropy} defined in the beginning of the section:
\[-\log \beta_n(\eps)=D_{H}^\eps(\rho^{\otimes n}\|\sigma^{\otimes n}).\]

The celebrated \textbf{Stein's lemma} (\cite{HP, ON}) asserts that for any $\eps\in (0,1)$,
\begin{align}\lim_{n\to\infty} \frac{1}{n}D_{H}^\eps(\rho^{\otimes n}\|\sigma^{\otimes n})=D(\rho\|\sigma). \label{gstein}\end{align}
The limits sharing the same spirit for $D_{\max}^\eps$ and $D_{\min}^\eps$ are called
\textbf{asymptotic equipartition property (AEP)} (see \cite{tomamichel-book, tomamichel, toma-thesis, nuradha2024fidelity})
\begin{align*}&\lim_{n\to\infty} \frac{1}{n}D_{\max}^\eps(\rho^{\otimes n}\|\sigma^{\otimes n})=\lim_{n\to\infty} \frac{1}{n}D_{\min}^\eps(\rho^{\otimes n}\|\sigma^{\otimes n})
=D(\rho\|\sigma).\pl \label{saep}\end{align*}


\subsection{Generalized Stein's lemma}

In the framework of composite hypothesis testing, the null hypothesis or the alternative hypothesis is given by a more general set of quantum states. In fact, when distinguishing $\rho$ from a whole set of quantum states, one considers this generalized Stein's Lemma scenario, where the alternative hypothesis is a convex set of quantum states. As we are going to distinguish many copies $\rho^{\otimes n}$ of $\rho$,  from copies of the given set, we need a hierarchy of the sets of alternative hypotheses that grow with $n\geq 1$. 
The family of states satisfies some properties that are motivated from a general resource theory point of view, and we outline them below.
Let ${\cF}_n \subseteq {\cS}({\cH}^{\otimes n}), n \in \mathbb{N}$, satisfying the following properties:
\begin{enumerate}
        \item \label{cond1} Each ${\cF}_n$ is convex and closed.
        \item \label{cond2} Each ${\cF}_n$ contains $\sigma^{\otimes n}$,
        for a full rank state $\sigma \in {\cS}({\cH})$.
        \item \label{cond3} If $\rho \in {\cF_{n +1}}$, then
        $\tr_{k}(\rho) \in {\cF}_{n}$, for every $k \in \{1, ..., n + 1 \}$.
        \item \label{cond4} If $\rho \in {\cF}_{n}$ and
        $\nu \in {\cF}_m$, then $\rho \otimes \nu \in
        {\cF}_{n+m}$.
        \item \label{cond5} If $\rho \in {\cF}_n$, then
        $P_{\pi}\rho P_{\pi}^* \in {\cF}_n$; where $\pi \in S_n$ is a permutation of $n$ elements and $P_\pi$ is the unitary implementing the permutation $\pi$ on $\Ha^{\otimes n}$.

\end{enumerate}

The above axioms are usually inspired by quantum resource theories. Indeed, the general framework of quantum resource theories (\cite{CG}) is designed to analyze various quantum resources under a unified format, the set of entangled states being one of the resources. Here, one usually identifies a family of systems of interest, modeled by finite-dimensional Hilbert space $\cH$, and over each $\cH$ a set ${\cF}_n \subseteq {\cS}({\cH}^{\otimes n}), n \in \mathbb{N}$ of free states, i.e. quantum states that are easily available, for example, the separable states.

Apart from the separable states, another example that satisfies the above axioms is the set of incoherent states \[\F_n=\displaystyle\rm{conv} \{\bigotimes_{k=1}^n |j_k\rangle\langle j_k|  \},\]
for a fixed orthonormal basis $\{|j\rangle\}$ of $\Ha$.

Now given ${\cF}_n \subseteq {\cS}({\cH}^{\otimes n}), n \in \mathbb{N}$, the \textbf{relative entropy of resource} of any $\rho\in \S(\Ha)$ is defined as

\[D_\F(\rho)=\displaystyle \inf_{\sigma\in \F_1} D(\rho\| \sigma),\]
and the regularized \textbf{relative entropy of resource} is defined as

\[D_\F^\infty(\rho)=\displaystyle \lim_{n\to\infty} \frac{1}{n}D_{\F_n}(\rho^{\otimes n}).\]

In this context, the generalized Stein's lemma is stated as follows.

\begin{theorem} [Generalized quantum Stein's lemma (see \cite{Hayashi-yamasaki, Lami-gqsl})]
For any family of sets $\{ {\cF}_n \}_{n \in \mathbb{N}}$ satisfying the Axioms 1,2 and 4 in subsection \ref{subsection-1}, it holds that for any $ \varepsilon\in (0,1)$
\begin{equation}\label{eq-gen-stein}
 \liminf_{n\to\infty} \frac{1}{n} \min_{\sigma_n\in \F_n} D_{H}^\eps(\rho^{\otimes n}\| \sigma_n)=D_\F^\infty(\rho).
\end{equation}
\end{theorem}

As stated earlier, before the recent breakthrough, where the proof of the generalized Stein's lemma was put forward, there were few instances where the validity of this result was known to be true. These cases include quantum coherence, pseudo-entanglement, etc (see \cite{gen-stein}). In the context of quantum coherence with $\N$ being the diagonal subalgebra, this was established in \cite{BBH} (see also \cite{max-coherence} \cite{distillation}). Another special case, where $\N=\mathbb{C}{\bf 1}_A\otimes B(\cH_{B})\subseteq B(\cH_A)\otimes B(\cH_B)$, for which the generalized Stein's lemma is known to be true was shown by \cite{HT}. Thus, our result captures all these known cases and provides a larger class of examples where the generalized Stein's lemma can be established. Another special case, whereles where the generalized Stein's lemma can be established. We again emphasize that our work provides a finer analysis of the composite hypothesis testing in the subalgebra case, as it allows one to obtain a second-order expansion of the hypothesis testing relative entropy.


Let $\M\subseteq B(\Ha)$ be the commutative subalgebra spanned by the fixed basis $\{|j\rangle\}$. The incoherent states
$\rm{conv}\{\bigotimes_{k=1}^n |j_k\rangle\langle j_k|\}$ can be viewed as the state spaces of the commutative subalgebra $\M^{\ten n}$. Indeed, this is a special case of a more general fact: given any von Neumann subalgebra  $\N\subseteq B(\Ha)$, the state spaces
$\F_n=\S(\N^{\otimes n})\subseteq \S(\Ha^{\otimes n})$ satisfy the above axioms, where by $\S(\N^{\otimes n})$ we denote $S(\Ha^{\otimes n})\cap \N^{\otimes n}$. Here we note that $\S(\N^{\otimes n})$ may contain separable states as well as entangled states. Indeed, if the subalgebra $\N\subseteq B(\Ha)$ is non-commutative, then it is not hard to see that the state space $\S(\N^{\otimes n})$ contains both separable and entangled states of the system $\S(\Ha^{\otimes n})$. This is interesting, as in the quantum coherence, one always obtains separable states as a resource.


\subsection{Subalgebra entropies}\label{subsection: subalgebra entropy}
Let $\cH$ be a finite-dimensional Hilbert space and $\N\subset B(\cH)$ be a von Neumann subalgebra (or simply, a unital $*$-subalgebra).
The state space of $\N$ is defined as
\[ \S(\N)= \N\cap \S(\Ha)\pl.\]
Given a state $\rho\in B(\Ha)$, we define the subalgebra entropy
\[ \mathbb{D}(\rho\|\N):=\inf_{\sigma\in \S(\N)}  \mathbb{D}(\rho\|\sigma).\]
for $\mathbb{D}=D, D_\al,  D_H^\eps, D_{\max}, D_{\min}$  correspondingly. Some of these quantities were studied in \cite{GJL19}. Also, the smooth max-relative entropy with respect to $\N$ is defined as 
\[\label{minmaxN} D_{\max}^\eps(\rho\|\N)=\inf_{\rho'\sim^{\eps}\rho}D_{\max}(\rho'\|\N),\]
and the smooth min-relative entropy with respect to $\N$ is defined as 
\[D_{\min}^\eps(\rho\|\N)=\sup_{\rho'\sim^{\eps}\rho}D_{\min}(\rho'\|\N).\]
The above definitions appeared for the special case of the diagonal subalgebra in \cite{max-coherence} . In particular, it is known (see \cite{GJL19}, Proposition 2.1) that for the relative entropy $\mathbb{D}=D$,
\[D(\rho\|\N)= D(\rho\| \E_\N(\rho))\]
where $\E_\N:B(\cH)\to B(\cH)$ is the unique map onto $\N$ such that for all $x\in\N, y\in B(\cH)$
\begin{align} \tr(xy)=\tr(x\E_\N(y))\pl. \label{cd}\end{align}
This map $\E_\N$ is called the (trace-preserving) conditional expectation onto $\N$.
Moreover, $\E_\N$ is also unital completely positive, hence a quantum channel (CPTP map), sending identity to identity.

The Stein's lemma (Theorem \ref{thm:1}) for hypothesis testing relative entropy $D_H^\eps$ follows from the following strong asymptotic equipartition property (AEP) for $D_{\max}^\eps$ (see the proof of Theorem \ref{max-AEP}).
\begin{theorem}[AEP for subalgebras] \label{main-thm-2} For any $\rho\in B(\Ha)$, subalgebra $\N\subseteq B(\Ha)$ and $\eps \in (0,1)$, we have
\begin{align*}
\lim_{n\to \infty}\frac{1}{n}D_{\max}^{\eps}(\rho^{\otimes n} \| \N^{\otimes n})=\lim_{n\to \infty}\frac{1}{n}D_{\min}^{\eps}(\rho^{\otimes n} \| \N^{\otimes n})= D(\rho\|\N).
\end{align*}
\end{theorem}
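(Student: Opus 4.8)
The plan is to prove Theorem \ref{main-thm-2} by reducing the smoothed subalgebra entropies to smoothed conditional entropies via the dilation Theorem \ref{thm:2}, and then invoking the known AEP for conditional entropies. First I would use Theorem \ref{thm:2}(i) to rewrite, for each fixed $n$,
\begin{align*}
D_{\max}^\eps(\rho^{\otimes n}\|\N^{\otimes n}) &= -H_{\min}^\eps(E^n|A^n)_{V_n\rho^{\otimes n}V_n^*},\\
D_{\min}^\eps(\rho^{\otimes n}\|\N^{\otimes n}) &= -H_{\max}^\eps(E^n|A^n)_{V_n\rho^{\otimes n}V_n^*},
\end{align*}
where $V_n$ is the Stinespring dilation of the conditional expectation $\E_{\N^{\otimes n}}:B(\Ha^{\otimes n})\to\N^{\otimes n}$. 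The crucial structural observation to establish is that this dilation tensorizes: since $\E_{\N^{\otimes n}}=\E_\N^{\otimes n}$, the dilation isometry factors as $V_n=V^{\otimes n}$ (up to a canonical reordering of the tensor legs), so that the dilated state is itself i.i.d., namely $V_n\rho^{\otimes n}V_n^*=(V\rho V^*)^{\otimes n}$ on the system $(AE)^{\otimes n}$ regrouped as $A^nE^n$. Writing $\omega:=V\rho V^*$, the problem becomes computing $\lim_n \tfrac1n H_{\min/\max}^\eps(E^n|A^n)_{\omega^{\otimes n}}$.

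Next I would apply the fully quantum asymptotic equipartition property for smoothed conditional min- and max-entropies (as recorded in \cite{tomamichel-book}, and see \cite{FGR}), which states that for any fixed $\eps\in(0,1)$,
\[
\lim_{n\to\infty}\frac1n H_{\min}^\eps(E^n|A^n)_{\omega^{\otimes n}}=\lim_{n\to\infty}\frac1n H_{\max}^\eps(E^n|A^n)_{\omega^{\otimes n}}=H(E|A)_\omega,
\]
where $H(E|A)_\omega$ is the (von Neumann) conditional entropy. Combining the two displays yields that both regularized subalgebra quantities converge to $-H(E|A)_\omega$. To finish, I would identify this limit with $D(\rho\|\N)$. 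For this I invoke Theorem \ref{thm:2}(ii) at $\al=1$ (equivalently, pass to the $\al\to1$ limit of the sandwiched R\'enyi family, which is continuous), giving $D(\rho\|\N)=-H(E|A)_\omega$, exactly the common value above. Chaining the equalities produces the desired AEP.

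The main obstacle I anticipate is justifying the tensorization step rigorously, i.e. that the Stinespring dilation of $\E_{\N^{\otimes n}}$ is (unitarily equivalent to) the $n$-fold tensor power of the dilation of $\E_\N$, so that the dilated state is genuinely i.i.d. This requires knowing $\E_{\N^{\otimes n}}=\E_\N^{\otimes n}$ (which follows from uniqueness of the trace-preserving conditional expectation and the factorization $\N^{\otimes n}=\N\otimes\cdots\otimes\N$), and then checking that the minimal Stinespring dilations multiply tensorially and that the recombination of environment and system factors into $A^n$ and $E^n$ is the permutation under which the conditional entropies are invariant. A secondary technical point is the continuity in $\al$ needed to pass from the R\'enyi identities of Theorem \ref{thm:2}(ii) to the $\al=1$ (von Neumann) case; this is standard but should be stated. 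Once tensorization is secured, the remainder is a direct application of Theorem \ref{thm:2} and the conditional-entropy AEP, so the entire argument reduces the subalgebra AEP to the already-established i.i.d.\ conditional-entropy AEP.
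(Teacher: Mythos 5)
Your proposal is correct and follows essentially the same route as the paper: reduce $D_{\max/\min}^{\eps}(\rho^{\otimes n}\|\N^{\otimes n})$ to $-H_{\min/\max}^{\eps}(E^n|A^n)$ of the i.i.d.\ dilated state via Theorem \ref{thm:2}, invoke the conditional-entropy AEP of \cite{tomamichel-book}, and identify the limit with $D(\rho\|\N)$ through the $\al=1$ case of the dilation identity. In fact you are more explicit than the paper about the tensorization $V_n\cong V^{\otimes n}$ (which the paper uses silently when writing $(V\rho V^*)^{\otimes n}$), and the only cosmetic difference is that the paper additionally records the easy upper bound $D_{\max}^{\eps}(\rho^{\otimes n}\|\N^{\otimes n})\le D_{\max}^{\eps}(\rho^{\otimes n}\|\E(\rho)^{\otimes n})$ via the two-state AEP, which your equality-based argument subsumes.
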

Note that here the right hand side does not need regularization because
\begin{align*} D(\rho^{\otimes n}\| \N^{\otimes n})=D(\rho^{\otimes n}\| \E_{\N^{\otimes n}}(\rho^{\otimes n}))=D(\rho^{\otimes n}\| \E_{\N}^{\otimes n}(\rho^{\otimes n}))
=nD(\rho\|\N)\pl.
\end{align*}

We also emphasize that the above theorem works for any fixed $\eps \in (0,1)$, and it is crucial in deriving the Stein's lemma.

We illustrate the above theorem with some special cases.

\begin{exam}{\label{exam:1} \rm  Let $\N=\mathbb{C} {\bf 1}$ be the trivial subalgebra. The conditional expectation onto the scalars $\mathbb{C} {\bf 1}\subset B(\Ha)$ is the completely depolarizing map
\[\E_{\mathbb{C}}:B(\Ha)\to \mathbb{C}{\bf 1}\pl, \E_{\mathbb{C}}(\rho)=\tr(\rho)\frac{\bf 1}{d}\pl, \]
where $d=\dim \Ha$. In this setting, the subalgebra entropy is essentially the  von Neumann entropy as \[D(\rho\|\mathbb{C} {\bf 1})=\log d- H(\rho)\pl, \]
where $H(\rho)=-tr(\rho\log \rho )$ is the von Neumann entropy. 

Similarly,
\begin{align*}D_{\max}(\rho \| \mathbb{C} {\bf 1})=\log d- H_{\min}(\rho)\pl, \\
D_{\min}(\rho \| \mathbb{C} {\bf 1})=\log d- H_{1/2}(\rho)\pl,
\end{align*}
where \begin{align*}&H_{\min}(\rho)=H_{\infty}(\rho)=-\log \norm{\rho}{\infty},\\
&H_{1/2}(\rho)=2\log \tr(\sqrt{\rho}).
\end{align*}
Theorem \ref{main-thm-2} gives the AEP for von Neumann entropy,
\[\lim_{n\to \infty}\frac{1}{n}H_{\max}^{\eps}(\rho^{\otimes n} )=\lim_{n\to \infty}\frac{1}{n}H_{\min}^{\eps}(\rho^{\otimes n} )= H(\rho).\]}
\end{exam}
\begin{exam}{\rm \label{exam:2}Let $\Ha_{A}$ and $\Ha_{B}$ be two Hilbert spaces.
The following AEP for conditional entropy was proved in \cite{tomamichel, tomamichel-book} that for a bipartite state $\rho_{AB}\in \S(\Ha_{A}\ten \Ha_{B})$,
\begin{align}\label{eq:qaep} &\lim_{n\to \infty}\frac{1}{n}H_{\max}^{\eps}(A^n|B^n)_{\rho^{\otimes n}}=\lim_{n\to \infty}\frac{1}{n}H_{\min}^{\eps}(A^n|B^n)_{\rho^{\otimes n}}= H(A|B)_\rho\pl.\end{align}
where the conditional entropies are defined as
\begin{align*}
&H(A|B)_\rho=H(\rho_{AB})- H(\rho_B)\pl,\\
&H_{\max}(A|B)_{\rho}=-\inf_{\sigma \in \S(\Ha_B)}D_{\min}(\rho_{AB}||  {\bf 1}_A\ten \sigma_B)\pl,\\
&\pl H_{\max}^\eps(A|B)_{\rho}=\min_{\rho\sim^{\eps} \rho'}H_{\max}(A|B)_{\rho'}.\\
&H_{\min}(A|B)_{\rho}=-\inf_{\sigma \in \S(\Ha_B)}D_{\max}(\rho_{AB}|| {\bf 1}_A\ten \sigma_B)\pl,\\
&\pl H_{\min}^\eps(A|B)_{\rho}=\max_{\rho\sim^{\eps} \rho'}H_{\min}(A|B)_{\rho'}
\end{align*}
Up to the same normalization of the previous example, this is special case of Theorem \ref{main-thm-2} by choosing $\N=\mathbb{C}{\bf 1}_A\ten B(\Ha_{B})$ being a tensor component of $B(\Ha_{A}\ten \Ha_{B})$.
In this case, the conditional expectation is defined as
$X_A\otimes X_B\mapsto \mathrm{tr}(X_A)({\bf 1}_A/d_A)\otimes X_B$, where $d_A$ is the dimension of the $A$-system.

} 
\end{exam}


\begin{exam}{\rm \label{exam:3} For quantum coherence, let $\{\ket{j}\}_{j=1}^d$ be a fixed orthonormal basis of $\Ha$. The commutative subalgebra $\I=\text{span}\{\ket{j}\bra{j}\}_{j=1}^d$ models the incoherent subalgebra and the density operators in $\I$ are states with no coherence, called incoherent states. The recent progress establishes that the relative entropy of coherence is the fundamental quantity quantifying the cost of asymptotic conversion between coherent states (\cite{dilution, distillation, Winter-Yang, HFW}). 
This result can be understood as a special case of our theorem \ref{main-thm-2}, where $\N=\I$ is the maximal abelian subalgebra of $B(\ha)$.}
Note that here $D(\rho||\N)$ is the relative entropy of coherence. Both limits hold in Theorem \ref{main-thm-2} for a fixed error parameter $\eps$ without letting $\eps \to 0$, which has implications in the \textbf{strong converse} of the resource theory of coherence. Here, a strong converse usually means that the limit equation in the theorem \ref{main-thm-2} holds for any $\eps\in (0, 1)$. Such strong converse was also obtained in the recent work of Hayashi, Fang, and Wang (\cite{HFW}).
\end{exam}

\begin{exam}{\rm \label{exam:4} Let $G$ be a compact group and $U: G \to B(\Ha)$ be a unitary representation. Consider $\N_G:=\{\pl x \in B(\Ha)\pl | \pl U_gxU_g^*=x \pl,  \forall \pl g\in G \pl. \}$ be the $G$-invariant subalgebra. The conditional expectation onto $\N_G$ is given by
\[\E_G(x)=\int_G U_gxU_g^*d\mu(g)\pl, \]
where $\mu$ is the Haar measure. For any state $\rho$, $\E_G(\rho)$ is the projection of $\rho$ to the $G$-symmetric states. The relative entropy $D(\rho||\N_G)=D(\rho||\E_G(\rho))$ is called the relative entropy of $G$-asymmetry, and $D_{\min}(\rho||\N_G)$ and $D_{\max}(\rho||\N_G)$ are called $G$-asymmetry measures in \cite{Marvian-thesis, MS}. Our above theorem implies that for any compact group symmetry,
\begin{align*} &\lim_{n\to \infty}\frac{1}{n}D_{\max}^{\eps}(\rho^{\otimes n} \| \N_G^{\otimes n})=\lim_{n\to \infty}\frac{1}{n}D_{\min}^{\eps}(\rho^{\otimes n} \| \N_G^{\otimes n})= D(\rho\|\N_G)\pl.
\end{align*}
}
\end{exam}

\section{Proof of Stein's lemma for subalgebra entropies}\label{subalge vs conditional entropy}

\subsection{Conditional expectations:}
We need some preliminary results to get to the proof of the main theorems in this section. 
Specifically, we need some results on conditional expectations, which can be of independent interest. 
We start with a lemma about the multiplicative domain of a unital completely positive (UCP) map.  Let $\Phi:\A\to\B$ be a UCP map between two C$^*$-algebras $\A$ and $\B$. The multiplicative domain of $\Phi$ is defined as follows:
\begin{align*}\M_\Phi= \{a\in \A: \Phi(ax)=\Phi(a)\Phi(x),
\ \Phi(xa)=\Phi(x)\Phi(a), \forall x\in \A\}.\end{align*}
\begin{lemma}\label{lemma-comm}
Let $\Phi: \A\rightarrow B(\mathcal{H})$ be a UCP map with Stinespring dilation $\Phi(x)=V^*\pi(x)V$, where $\pi: \A\rightarrow B(\K)$ is a homomorphism  and $V: \mathcal{H}\rightarrow \K$ is a Hilbert space isometry. Then we have
\[\M_\Phi=\{a\in \A: \pi(a) VV^*=VV^*\pi(a).\}\]
\end{lemma}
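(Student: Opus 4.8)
The plan is to carry out the whole argument inside the Stinespring picture. Write $P := VV^*$; since $V$ is an isometry we have $V^*V = 1_{\mathcal H}$, so $P$ is an orthogonal projection on $\K$ satisfying the two identities $V^*(1-P) = V^* - V^*VV^* = 0$ and $(1-P)V = V - VV^*V = 0$. The goal is to show that both defining multiplicativity conditions are together equivalent to the single commutation relation $\pi(a)P = P\pi(a)$, and I would prove the two inclusions separately.

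For the inclusion $\{a : \pi(a)P = P\pi(a)\} \subseteq \M_\Phi$, assume $\pi(a)$ commutes with $P$ (hence with $1-P$, and $\pi(a)^*$ does too). For arbitrary $x \in \A$, using that $\pi$ is a $*$-homomorphism I would compute
\[\Phi(ax) - \Phi(a)\Phi(x) = V^*\pi(a)\pi(x)V - V^*\pi(a)\,P\,\pi(x)V = V^*\pi(a)(1-P)\pi(x)V,\]
and then move $(1-P)$ past $\pi(a)$ to reach $V^*(1-P)\pi(a)\pi(x)V = 0$. The relation $\Phi(xa) = \Phi(x)\Phi(a)$ follows the same way, now using $(1-P)\pi(a)V = \pi(a)(1-P)V = 0$. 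Thus $a \in \M_\Phi$.

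For the reverse inclusion $\M_\Phi \subseteq \{a : \pi(a)P = P\pi(a)\}$, I would specialise the defining conditions to $x = a^*$. Because a UCP map is $*$-preserving, $\Phi(a^*) = \Phi(a)^*$, so membership in $\M_\Phi$ yields the Choi-type equalities $\Phi(a^*a) = \Phi(a)^*\Phi(a)$ and $\Phi(aa^*) = \Phi(a)\Phi(a)^*$. Expanding the first in the dilation gives $V^*\pi(a)^*(1-P)\pi(a)V = 0$; since $1-P$ is a projection this is exactly $T^*T = 0$ for $T = (1-P)\pi(a)V$, forcing $(1-P)\pi(a)V = 0$, which rearranges to $\pi(a)P = P\pi(a)P$. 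The second equality gives symmetrically $(1-P)\pi(a)^*V = 0$, i.e.\ $P\pi(a) = P\pi(a)P$, and combining the two yields $\pi(a)P = P\pi(a)P = P\pi(a)$.

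The computation is essentially mechanical, so the only real subtlety, and the step I would flag as the crux, is the passage from the operator equality to the kernel condition $(1-P)\pi(a)V = 0$ via $T^*T = 0 \Rightarrow T = 0$, together with the observation that one genuinely needs both the $a^*a$ and the $aa^*$ equalities: a single one only produces the weaker $\pi(a)P = P\pi(a)P$, and it is the combination of both (reflecting the two-sided nature of the multiplicative domain) that upgrades this to the full commutation $\pi(a)P = P\pi(a)$.
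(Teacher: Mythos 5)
Your proof is correct and follows essentially the same route as the paper: the reverse inclusion is the identical computation, extracting $(1-P)\pi(a)V=0$ and $(1-P)\pi(a)^*V=0$ from the two Choi-type equalities via $T^*T=0\Rightarrow T=0$ and combining them into full commutation. The only (harmless) difference is in the easy inclusion, where you verify multiplicativity against all $x$ directly from the definition, whereas the paper shortcuts through Choi's characterization of the multiplicative domain.
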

\begin{proof}
Note that as $V$ is an isometry, $V^*V=1$ and $VV^*=e$ is a projection. It is well known (due to Choi \cite{choi1}) that \begin{align*}\M_\Phi=\{a\in \A: \Phi(aa^*)=\Phi(a)\Phi(a)^*, \ \Phi(a^*a)=\Phi(a^*)\Phi(a)\}.\end{align*}
Using this characterization, one way is immediate:
if $a\in \A$ such that $\pi(a)e=e\pi(a)$, then, it holds that
\begin{align*}\Phi(a)\Phi(a^*)=V^*\pi(a)VV^*\pi(a^*)V
=V^*VV^*\pi(a)\pi(a^*)V
=\Phi(aa^*).
\end{align*}
The other side $\Phi(a^*a)=\Phi(a^*)\Phi(a)$ follows similarly, hence $a\in \M_{\Phi}$.

Conversely, suppose $a\in \M_\Phi$. From $\Phi(aa^*)=\Phi(a)\Phi(a^*)$ we get
\[ V^*\pi(a)\pi(a^*)V=V^*\pi(a)VV^*\pi(a^*)V.\]
Then \begin{align*}e\pi(a)\pi(a^*)e &=VV^*\pi(a)\pi(a^*)VV^*\\
&=V ( V^*\pi(a)VV^*\pi(a^*)V) V^* \\
&=e\pi(a)e\pi(a^*)e.
\end{align*}
Hence we get
\begin{align*} 0&=e\pi(a)\pi(a^*)e-e\pi(a)e\pi(a^*)e\\
&=e\pi(a)(1-e)\pi(a^*)e\\
&=e\pi(a)(1-e) [e\pi(a)(1-e)]^*
\end{align*}
This yields, \begin{equation}\label{eq-1}
e\pi(a)(1-e)=0\implies e\pi(a)=e\pi(a)e.
\end{equation}
Since the multiplicative domain is closed under adjoint, we also have 
\begin{equation}\label{eq-2}e\pi(a^*)=e\pi(a^*)e.\end{equation}
Putting \eqref{eq-1} and \eqref{eq-2} together,
\begin{align*}e\pi(a)&=e\pi(a)e=(e\pi(a^*)e)^*=(e\pi(a^*))^*=\pi(a)e,
\end{align*}
which completes the proof.
\end{proof}
\begin{remark}{\rm 
Note that if $\Phi(A)=\sum_i K_i^* XK_i$ is a ucp map on a finite-dimensional Hilbert space $\Ha$, then the Stinespring isometry can be defined as $V=\sum_i  e_i\otimes K_i$ and the homomorphism $\pi(A)={\bf 1}\otimes A$. Then the multiplicative domain is the commutant of the set $S:=\text{span}\{K_iK_j^*\}_{i,j}$ (see Proposition 2 in \cite{carbone-jencova}). Note that this set is not unital but closed under the adjoint and a closely related set $T:=\text{span}\{K_i^*K_j\}_{i,j}$ is an important set, called the confusability graph of the channel $\Phi^\dagger(\cdot)=\sum_i K_i \cdot K_i^*$ in the theory of zero-error communication (\cite{DSW}). Note also that the commutant of $T$ was identified as the largest correctable subalgebra of observables in the Heisenberg picture in \cite{BKK}. For unital channels, the multiplicative domain is the commutant of $T$ (see Corollary 1.4 in \cite{miza}).} 
\end{remark}
We apply the above lemma to conditional expectations. Recall (see \cite{takesaki}) that a finite von Neumann algebra $\M$ is a von Neumann algebra equipped with a normal, faithful finite trace, we denote it by $\tr$. Here, finite means that the trace of the identity element is finite, $\tr({\bf 1})<\infty$. By faithfulness, we mean that if $\tr(a)=0$, for any positive element $a\in \M$, then $a=0$. And finally, a trace is normal, if it is weak$^*$ continuous, that is, if for every monotone increasing net of elements $\{x_\alpha\}$ in $\M$ with least upper bound $x$, we have $\tr(x_\alpha)$ converges to $\tr (x)$. It is known that in a finite von Neumann algebra, every von Neumann subalgebra admits a unique trace-preserving conditional expectation (see \cite{Umegaki}). 
\begin{prop}\label{prop-cond. exp}
Let $\N\subseteq \M$ be a subalgebra of a finite von Neumann algebra $\M\subset B(\mathcal{H})$ and
$\E: \M \rightarrow \N$ be a trace-preserving conditional expectation with a Stinespring representation
$\E(x)=V^* \pi(x) V$,
where $\pi: \M\rightarrow B(\K)$ is a homomorphism and $V: \mathcal{H} \rightarrow \K $ is an isometry. Then it holds that \[V\E(x)V^*\leq \pi (\E(x)), \forall x\in \M_+, \]
where $\M_+$ denotes the set of all positive elements of $\M$.
\end{prop}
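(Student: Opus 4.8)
The plan is to deduce the inequality from Lemma~\ref{lemma-comm} by exploiting the fact that the range $\N$ of the conditional expectation sits inside its own multiplicative domain. The first step is to recall the bimodule property of $\E$: since $\E$ is a conditional expectation onto $\N$, for every $a\in\N$ and $x\in\M$ one has $\E(ax)=a\E(x)$ and $\E(xa)=\E(x)a$, so that $\N\subseteq\M_\E$. Applying Lemma~\ref{lemma-comm} to $\Phi=\E$ (with homomorphism $\pi$ and isometry $V$) then gives, writing $e:=VV^*$, that $\pi(a)\,e=e\,\pi(a)$ for every $a\in\N$; that is, $\pi(\N)$ commutes with the projection $e$.

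Next, I would fix $x\in\M_+$ and set $a:=\E(x)$. Positivity of $\E$ gives $a\in\N_+$, hence $\pi(a)\ge 0$. Because $\E$ restricts to the identity on $\N$ and has the Stinespring form $\E(\cdot)=V^*\pi(\cdot)V$, the element $a$ satisfies $a=\E(a)=V^*\pi(a)V$. Compressing by $V$ on both sides yields
\[ V\E(x)V^* = VaV^* = VV^*\pi(a)VV^* = e\,\pi(a)\,e. \]
Using the commutation $e\pi(a)=\pi(a)e$ from the first step together with $e^2=e$, this simplifies to $V\E(x)V^*=\pi(a)\,e$. Finally, since $\pi(a)\ge 0$ commutes with the projection $e$, the difference $\pi(a)-\pi(a)e=\pi(a)(1-e)=\bigl(\pi(a)^{1/2}(1-e)\bigr)\bigl(\pi(a)^{1/2}(1-e)\bigr)^*$ is positive semidefinite, which is exactly the desired conclusion $V\E(x)V^*=\pi(a)e\le\pi(a)=\pi(\E(x))$.

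The only genuinely substantive point is the first step, namely recognizing that $\N$ lies in the multiplicative domain $\M_\E$ and feeding this into Lemma~\ref{lemma-comm} to obtain $[\pi(a),e]=0$; everything afterwards is a short positivity manipulation. I therefore expect the main (and rather mild) obstacle to be bookkeeping: confirming that the conditional expectation is unital so that $V$ is an isometry with $V^*V=1$ and $e=VV^*$ an honest projection, and verifying that the idempotency $\E|_\N=\mathrm{id}$ is correctly used to pass from $a\in\N$ to $a=V^*\pi(a)V$. No regularity beyond the stated hypotheses is needed, and the finite-trace assumption on $\M$ enters only through the existence of the trace-preserving conditional expectation $\E$ and its Stinespring dilation.
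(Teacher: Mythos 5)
Your proof is correct and follows essentially the same route as the paper: establish $\N\subseteq\M_\E$, invoke Lemma~\ref{lemma-comm} to get $[\pi(\E(x)),e]=0$ with $e=VV^*$, and conclude by a positivity/commutation argument that $V\E(x)V^*=e\pi(\E(x))e\le\pi(\E(x))$. The only cosmetic difference is that you justify $\N\subseteq\M_\E$ via the $\N$-bimodule property of $\E$, whereas the paper uses the Kadison--Schwarz inequality together with faithfulness of the trace; both are valid.
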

\begin{proof}It is easy to see that the multiplicative domain $\M_\E$ of $\E$ contains $\N$. Indeed, if $a\in \N$, then so is $a^*$, and hence $aa^*\in \N$. Thus, $\E(aa^*)=aa^*=\E(a)\E(a^*)$, which, by Choi's multiplicative domain characterization, shows that $a\in \M_\E$.
This shows $\N\subset \M_\E$. 

Now, consider a Stinespring dilation $\E(\cdot)=V^* \pi(\cdot) V$. From Lemma \ref{lemma-comm}, it holds that $$\pi(\E(x))VV^*=VV^*\pi(\E(x)), \forall x\in \M.$$
Thus for any $x\in \M_+$,  we have
\begin{align*}\pi(\E(x))\geq VV^*\pi(\E(x)) VV^*=V\E(\E(x))V^*=V\E(x)V^*.\end{align*}
where we have used that $\E\circ\E=\E$.
\end{proof}

\begin{remark}{\rm \label{re:order}
For any ucp map $\Phi:B(\mathcal{H})\rightarrow B(\mathcal{H})$, the Stinespring dilation can be
\[\Phi(a)=V^* (a\otimes {\bf 1}_E)V,\]
where $V: \mathcal{H}\rightarrow \mathcal{H}\otimes \Ha_E$ is an isometry and the homomorphism $\pi$ is simply $a\rightarrow a\otimes 1_E$. Using the above proposition, we have 
\[V\E(x)V^*\leq \E(x)\otimes {\bf 1}_E, \forall x\in \M_+. \]
}\end{remark}

Before we proceed further, we note here that any trace-preserving conditional expectation is self-adjoint, that is, $\E=\E^\dagger$. Here, for $\E: B(\Ha)\rightarrow \N$, its Hilbert-Schmidt adjoint $\E^\dagger$ is defined as an operator on $B(\Ha)$ by simply assigning the value $0$ to the orthogonal complement of $\N$. And hence we have 
\begin{equation}\label{eq-selfdual}
\E(x)=V^*(x\otimes {\bf 1}_E)V=\E^\dagger(x)=\tr_E(VxV^*)
\end{equation}

To see this, for any $x, y$ we have
\begin{align*}\langle \E(x), y\rangle&=\tr (\E(x)y^*)=\tr(\E(\E(x)y^*))=\tr(\E(x)\E(y^*)), 
\end{align*}
where we used the fact that the range of $\E$ is inside the multiplicative domain and $\E$ is a trace-preserving idempotent. On the other hand, using similar properties we have
\begin{align*}\langle \E^\dagger(x), y\rangle=\tr (\E^\dagger(x)y^*)=\tr (x\E(y^*))=\tr(\E(x\E(y^*)))=\tr(\E(x)\E(y^*)).
\end{align*}
 Hence $\E=\E^\dagger$.
\subsection{A dilation result and generalized Stein's lemma}

The main result of this section connects the relative entropies for subalgebras to conditional entropies of an appropriate state. Since conditional entropies are widely used in quantum cryptography and quantum information processing tasks, our results provide an operational meaning to the relative entropies of subalgebras. 
Recall that for a bipartite state $\rho_{AB}\in \S(\Ha_A\otimes \Ha_B)$, and for $\alpha\in [1/2,1)\cup (1, \infty)$, the sandwiched R\'{e}nyi conditional entropy is defined as
\[H_\al (A|B)_{\rho_{AB}}=\sup_{\sigma_B\in \S(\Ha_B)} -D_\al(\rho_{AB}|| {\bf 1}_A\otimes \sigma_B), \]
and it follows that (see \cite{tomamichel-book}) \begin{align*}\lim_{\al\to 1} H_\al(A|B)_{\rho_{AB}}&=H(A|B)_\rho\end{align*}
where the standard relative entropy $(H(A|B)_\rho)$ is defined as
\begin{align*}
H(A|B)_\rho&=H(AB)_\rho-H(B)_\rho\\
&=-\tr (\rho_{AB}\log \rho_{AB}) + \tr(\rho_B\log \rho_B).\end{align*}
We are now ready to prove our dilation theorem. We refer to section \ref{subsection-1} for the definitions of  smoothed relative and conditional entropies.

\begin{theorem}\label{theorem: duality}
Let $\Ha_A$ be a finite-dimensional Hilbert space and $\N\subset B(\Ha_A)$ be a subalgebra. Suppose $V: \Ha_A\rightarrow \Ha_A\otimes \Ha_E$ is a Stinespring dilation isometry for the trace-preserving conditional expectation $\E: B(\Ha_A)\rightarrow \N$ such that
$\E(x)=V^*(x\otimes {\bf 1}_E)V$
\begin{itemize}
\item[i)] for $\eps\in [0,1)$,
\begin{align*}&D_{\max}^\eps(\rho\| \N)= - H_{\min}^\eps (E|A)_{V\rho V^* }.\\
&D_{\min}^\eps(\rho\| \N)= - H_{\max}^\eps (E|A)_{V\rho V^* }.
\end{align*}
\item[ii)]for $\al\in [1/2, \infty]$,
\[D_\alpha(\rho\| \N)= - H_\al (E|A)_{V\rho V^* }.\]
\item[iii)] Given a bipartite state $\rho_{AB}$,  define the hypothesis testing conditional entropy 
\[ H_h^\eps (A|B)_{\rho }:=-\inf_{\sigma_B\in\S(\Ha_B)} D_H^\eps(\rho_{AB}||{\bf 1}_A\ten \sigma_B)\pl.\] Then we have for any state $\rho$,
\[D_H^{\eps}(\rho\| \N)= - H_h^\eps (E|A)_{V\rho V^* }.\]
\end{itemize}
\end{theorem}

\begin{proof}
 We first prove the case for $D_{\max}^\eps$ and $D_{\min}^\eps$, which correspond to $\al=\infty$, and $\al=1/2$ (respectively) in the Sandwiched R\'enyi relative entropy $D_\al$. The argument for the rest, that is, for $D_\al, \al\in (1/2, \infty)$ is given in the end of the proof although it follows similar lines of arguments. 

 Denote by $e=VV^*$ the projection in $B(\Ha_A\ten \Ha_E)$.  For $D^\eps_{\max}$,
\begin{align*}
 D_{\max}^\eps(\rho\| \N)
 =&\inf_{\rho'\sim^{\eps}\rho} \inf_{\sigma\in \S(\Ha_A)} D_{\max}(\rho'\| \E(\sigma))\\
 \overset{(1)}{=}&\inf_{\rho'\sim^{\eps}\rho}\inf_{\sigma\in \S(\Ha_A)} D_{\max}(V\rho' V^*\| V V^*(\sigma\otimes {\bf 1}_E)VV^*)
  \\= &\inf_{\rho'\sim^{\eps}\rho} \inf_{\sigma\in \S(\Ha_A)} D_{\max}(V\rho' V^*\| e(\sigma\otimes {\bf 1}_E)e)
  \\ \overset{(2)}{=} &\inf_{\omega'\sim^{\eps} V\rho V^*} \inf_{\sigma\in \S(\Ha_A)} D_{\max}( e\omega'e \| e(\sigma\otimes {\bf 1}_E)e).
 \end{align*}
 Here the equality (1) follows from the fact that the isometry $V$ preserves $D_{\max}$ (see section $\mathrm{III}$ in \cite{purified-distance}). The equality (2) can be justified as follows:
 given an element $\omega'$ that is $\epsilon$-close to $V\rho V^*$, we have $$V\rho V^*=eV\rho V^*e\sim^{\epsilon}e\omega'e=V(V^*e\omega'e V)V^*,$$ where we used the fact that purified distance is monotone under trace non-increasing CP maps, that is, 
 \[ d(V\rho V^*, e\omega'e) \leq d(V\rho V^*, \omega').\]
  Then defining $\rho':=V^*e\omega'e V\sim^{\epsilon}V^*(V\rho V^*)V=\rho$, we have  $$D_{\max}(V\rho'V^* \| e(\sigma\otimes {\bf 1}_E)e)=D_{\max}(e\omega'e \| e(\sigma\otimes {\bf 1}_E)e).$$ 
 This shows one direction of (2). The other direction follows by taking $\omega'=V\rho'V^*$ for each feasible $\rho'$. 
 
  By the fact that the restriction map $\omega\mapsto e\omega e$ does not increase the purified distance and $D_{\max}$, we have the following inequality
 \begin{align*}
 D_{\max}^\eps(\rho\| \N)
 &{\leq} \inf_{\omega'\sim^{\eps} V\rho V^*} \inf_{\sigma\in \S(\Ha_A)} D_{\max}(\omega' \| \sigma\otimes {\bf 1}_E)\\
 &{\leq} \inf_{\omega'\sim^{\eps} V\rho V^*} \inf_{\sigma\in \S(\N)} D_{\max}(\omega' \| \sigma\otimes {\bf 1}_E)\\
  &\overset{(3)}{\le } \inf_{\rho'\sim^{\eps}\rho}  \inf_{\sigma\in \S(\N)} D_{\max}(V\rho' V^* \| \sigma\otimes {\bf 1}_E)\\
  &\overset{(4)}{\le } \inf_{\rho'\sim^{\eps}\rho}  \inf_{\sigma\in \S(\N)} D_{\max}(V\rho' V^* \| V\sigma V^*)\\
  &=\inf_{\rho'\sim^{\eps}\rho} \inf_{\sigma\in \S(\N)} D_{\max}(\rho' \| \sigma)\\
  &=D_{\max}^{\eps}(\rho\| \N)
  \end{align*}
  The inequality (3) is by defining $\omega'=V\rho'V^*$, for any $\rho'\sim^{\eps} \rho$. The inequality (4) uses the Lemma \ref{lemma-comm} and Remark \ref{re:order} that for $\sigma\in \N_+$
  \[ [e, \sigma\otimes {\bf 1}_E]=0\pl ,\pl  V\sigma V^*\le \sigma\otimes {\bf 1}_E,\]
  and the fact that $V\sigma V^*\le \sigma\otimes {\bf 1}_E$, implies $D_{\max}(\cdot \| \sigma\otimes {\bf 1}_E)\leq D_{\max}(\cdot \| V\sigma V^*)$ (\cite{tomamichel-book}).
Hence we have equality throughout, yielding
\begin{align*} D_{\max}^\eps(\rho\| \N)=\inf_{\omega'\sim^{\eps} V\rho V^*} \inf_{\sigma\in B(\Ha_A
)} D_{\max}(\omega' \| \sigma\otimes {\bf 1}_E)=-H_{\min}^\eps(E|A)_{V \rho V^*}. \end{align*}
For $D^\eps_{\min}$, we have
\begin{align*}
 D_{\min}^\eps(\rho\| \N)=&\sup_{\rho'\sim^{\eps}\rho} \inf_{\sigma\in \S(\Ha_A)} D_{\min}(\rho'\| \E(\sigma))\\
 =&\sup_{\rho'\sim^{\eps}\rho}\inf_{\sigma\in \S(\Ha_A)} D_{\min}(V\rho' V^*\| V V^*(\sigma\otimes {\bf 1}_E)VV^*)\\
  =& \sup_{\rho'\sim^{\eps}\rho} \inf_{\sigma\in \S(\Ha_A)} D_{\min}(V\rho' V^*\| e(\sigma\otimes {\bf 1}_E)e)\\
  \overset{(1)}{= }& \sup_{\rho'\sim^{\eps}\rho} \inf_{\sigma\in \S(\Ha_A)} D_{\min}(V\rho' V^*\| \sigma\otimes {\bf 1}_E)
  \end{align*}
  Here, the equality (1) is because $V\rho' V^*$ is supported on $e$, and the fact that
\begin{align}& D_{\min}(\rho||\sigma)=-2\log F(\rho,\sigma)\pl, \ \text{and} \ \pl F(e\rho e,\tau)= F(e\rho e, e\tau e)=F(\rho, e\tau e),. \label{eq:1}\end{align}
for any positive elements $\rho, \tau$. 

Then by having $\omega':=V\rho' V^*\sim^{\eps} V\rho V^*$,
  \begin{align*}
  &D_{\min}^\eps(\rho\| \N)\\
  \leq &\sup_{\omega'\sim^{\eps} V\rho V^*} \inf_{\sigma\in \S(\Ha_A)} D_{\min}(\omega'\| \sigma\otimes {\bf 1}_E)\\
   \leq &\sup_{\omega'\sim^{\eps} V\rho V^*} \inf_{\sigma\in \S(\N)} D_{\min}(\omega'\| \sigma\otimes {\bf 1}_E)\\
 \overset{(2)}{\leq}& \sup_{\omega'\sim^{\eps} V\rho V^*} \inf_{\sigma\in \S(\N)} D_{\min}( \omega' \| e(\sigma\otimes {\bf 1}_E)e )\\
\overset{(3)}{=}&\sup_{\omega'\sim^{\eps} V\rho V^*} \inf_{\sigma\in \S(\N)} D_{\min}( e\omega'e \| e(\sigma\otimes {\bf 1}_E)e )\\
\overset{(4)}{\leq} &\sup_{e\omega'e\sim^{\eps} V\rho V^*}  \inf_{\sigma\in \S(\N)}  D_{\min}(V V^*\omega'V V^* \| V V^*(\sigma\otimes {\bf 1}_E)V V^*)\\
  = &\sup_{\rho'\sim^{\eps}\rho}  \inf_{\sigma\in \S(\N)} D_{\min}(V\rho' V^* \| V\sigma V^*)\\
  \overset{(5)}{=}&\sup_{\rho'\sim^{\eps}\rho} \inf_{\sigma\in \S(\N)} D_{\min}(\rho' \| \sigma)\\
  =&D_{\min}^{\eps}(\rho\| \N)
  \end{align*}
Here, the inequality (2) uses again the  $[e, \sigma\otimes {\bf 1}_E]=0$ and hence $e(\sigma\otimes {\bf 1}_E) e\le \sigma\otimes {\bf 1}_E$. The equality (3) follows from \eqref{eq:1}. The inequality (4) follows from the fact that the purified distance satisfies the property $d(e\omega'e,V\rho V^*) \le d(\omega', V\rho V^*)$. The equality (5) uses $V^*(\sigma\otimes {\bf 1}_E)V=\E(\sigma)=\sigma$ for $\sigma\in \N$.

Hence we have equality throughout, yielding
\begin{align*} D_{\min}^\eps(\rho\| \N)
= \sup_{\omega'\sim^{\eps} V\rho V^*} \inf_{\sigma\in \S(\Ha_A)} D_{\min}(\omega' \| \sigma\otimes {\bf 1}_E)=-H_{\max}^\eps(E|A)_{V \rho V^*}. 
\end{align*}

For the sandwiched R\'enyi entropy case $D_\al, \al\in [1/2, \infty]$, it follows easily:
 \begin{align*}
 D_\alpha(\rho\| \N)
 =&\inf_{\sigma\in \S(\Ha_A)} D_\alpha(\rho\| \E(\sigma))\\
 =&\inf_{\sigma\in \S(\Ha_A)} D_\alpha(V\rho V^*\| V V^*(\sigma\otimes {\bf 1}_E)VV^*)\\
 = &\inf_{\sigma\in \S(\Ha_A)} D_\alpha(V\rho V^*\| e(\sigma\otimes {\bf 1}_E)e)\\
 \leq &\inf_{\sigma\in \S(\Ha_A)} D_\alpha(V\rho V^* \| (\sigma\otimes {\bf 1}_E))\\
 \leq &\inf_{\sigma\in \S(\N)} D_\alpha(V\rho V^* \| (\sigma\otimes {\bf 1}_E))\\
  \leq & \inf_{\sigma\in \S(\N)} D_\alpha(V\rho V^* \| V\sigma V^*)\\
  =& \inf_{\sigma\in \S(\N)} D_\alpha(\rho \| \sigma)\\
  =&D_\alpha(\rho\| \N),
  \end{align*}
where the equalities hold due to the invariance of $D_\al$ under isometries (\cite{tomamichel-book}). Recall that Sandwiched R\'enyi relative entropy for $\alpha\in (1,\infty)$ is monotone under CP trace non-increasing map that preserves the trace of the first state \cite[Theorem 1']{muller2017monotonicity}, and  for $\alpha\in [1/2,1)$, we have by operator Jensen inequality for the power function, \[ D_\alpha(\rho\| e\sigma e )\le D_\alpha(e\rho e\| \sigma  ),\]
for $e$ being the support of $\rho$. The first inequality  follows from the monotonicity and $eV\rho V^*e=V\rho V^*$. The last inequality is due to $V\sigma V^*\leq \sigma\otimes {\bf 1}_E $
for $\sigma\in \N_{+} $.
  
Hence we have equality throughout, yielding
\begin{align*} D_\alpha(\rho\| \N)
= \inf_{\sigma\in \S(\Ha_A)} D_\alpha(V\rho V^* \| \sigma\otimes {\bf 1}_E)=-H_\al(E|A)_{V \rho V^*}. 
\end{align*}

(iii) For the hypothesis-testing relative entropy, write $\beta_\eps(\eta\|\tau)$ for the optimized type-II error, so that $D_H^\eps(\eta\|\tau)=-\log\beta_\eps(\eta\|\tau)$. For any $\sigma\in\S(\Ha_A)$ and any feasible test $q$ for $\rho$, the test $VqV^*$ is feasible for $V\rho V^*$ and has type-II error
\[
\tr(VqV^*(\sigma\otimes {\bf 1}_E))=\tr(q\E(\sigma)).
\]
Hence $\beta_\eps(V\rho V^*\|\sigma\otimes {\bf 1}_E)\leq \beta_\eps(\rho\|\E(\sigma))$. This gives
\[
\inf_{\sigma\in\S(\Ha_A)}D_H^\eps(V\rho V^*\|\sigma\otimes {\bf 1}_E)\geq D_H^\eps(\rho\|\E(\sigma))\ge  D_H^\eps(\rho\|\N).
\]
For the reverse direction, if $\tau\in\S(\N)$, then the same dilation of tests gives the other inequality. For any feasible test $Q$ for $V\rho V^*$, the test $V^*QV$ is feasible for $\rho$. Now using $V\tau V^*\leq \tau\otimes {\bf 1}_E$, we obtain 
\[
\tr(V^*QV\tau)=\tr(QV\tau V^*)\leq\tr(Q(\tau\otimes {\bf 1}_E)).
\]
This gives $\beta_\eps(\rho\|\tau)\leq \beta_\eps(V\rho V^*\|\tau\otimes {\bf 1}_E)$, for $\tau\in\S(\N)$. 
Taking logarithm and negative sign we obtain $D_H^\eps(\rho\|\tau)\geq D_H^\eps(V\rho V^*\|\tau\otimes {\bf 1}_E)$.
Taking the infimum over $\tau\in\S(\N)$ yields, \[D_H^\eps(\rho\|\N)\geq \inf_{\tau\in \S(\N)}D_H^\eps(V\rho V^*\|\tau\otimes {\bf 1}_E) \geq \inf_{\sigma\in\S(\Ha_A)}D_H^\eps(V\rho V^*\|\sigma\otimes {\bf 1}_E).\]
This finishes the proof.
\end{proof}

\begin{remark}{\rm \label{rem:high}
   On a high-level, the main arguments of the above proof rely on three aspects of the (smoothed) sandwiched R\'enyi entropies, namely: isometric invariance, data processing inequality under partial isometry and operator monotonicity of the second argument. It is likely that a general unified proof of the above result ought to be possible for any divergence satisfying these three properties. However, for our proof, we keep it separate as the smoothed min- and max-entropies require a bit more extra care. We are grateful to the referees for pointing out this fact. } 
\end{remark}
\begin{remark}{\rm 
For the special case of $\N$ being "the maximal abelian subalgebra" of $B(\Ha_A)$, the connection between the min- and the max- relative entropy of coherence with the conditional $H_{\max}$ and $H_{\min}$ entropies has been obtained by Zhao \emph{et al.} \cite{zhao-17}. Here we prove this connection between the $\eps$-smoothed max- and min- entropies and for sandwiched $\alpha$-R\'enyi relative entropies with the corresponding conditional entropies for any subalgebra. This generalizes the previously known results for $\al=\frac{1}{2},\infty$ and the non-smoothed version $\eps=0$.}
\end{remark}

Combined with duality of smoothed and sandwiched conditional entropies (see \cite[Section 6.3.1]{tomamichel-book} and \cite[Theorem 10]{muller2013quantum}), we have the following identities.
\begin{corollary}\label{cor: cond-entropy-max}
Let $|\psi\rangle \in B(\Ha_{A}\otimes \Ha_F)$ be a purification of $\rho$. Then $\xi_{EAF}=(V \otimes {\bf 1}_F) |\psi\rangle\langle \psi | (V\otimes {\bf 1}_F)^*$ is a purification of $V\rho V^*$, and for $\eps\in [0,1)$
\begin{align}
&D_{\max}^\eps(\rho\| \N)=- H_{\min}^\eps (E|A)_{\xi} =H_{\max}^\eps (E|F)_{\xi}\pl,\\
&D_{\min}^\eps(\rho\| \N)=- H_{\max}^\eps (E|A)_{\xi} =H_{\min}^\eps (E|F)_{\xi}\pl,\\
&D_{\al}(\rho\| \N)=- H_{\al} (E|A)_{\xi} =H_{\beta} (E|F)_{\xi}\pl,
\end{align}
for $2=\frac{1}{\al}+\frac{1}{\beta}, \al\in [\frac{1}{2},\infty]$.
\end{corollary}
It is interesting that the conditional entropies can be negative or positive but the subalgebra entropies are always non-negative.

Now we are ready to state and prove the following main theorem of this section.

 \begin{theorem}\label{max-AEP} For any subalgebra $\N\subseteq B(\Ha)$ and a state $\rho\in B(\Ha)$, and for any $\eps\in (0,1)$, we have
\begin{align*}&\lim_{n\to \infty}\frac{1}{n}D_{\max}^{\eps}(\rho^{\otimes n} \| \N^{\otimes n})
=\lim_{n\to \infty}\frac{1}{n}D_{\min}^{\eps}(\rho^{\otimes n} \| \N^{\otimes n})=\lim_{n\to \infty}\frac{1}{n}D_{H}^{\eps}(\rho^{\otimes n} \| \N^{\otimes n})=D(\rho\|\N).\end{align*}
\end{theorem}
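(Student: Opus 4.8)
The plan is to push the entire statement down to the level of conditional entropies of a dilated i.i.d.\ state, where the asymptotic equipartition property is already known, and then to recover the hypothesis testing term by a squeezing argument.

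First I would fix the Stinespring dilation $V\colon\Ha\to\Ha\ten\Ha_E$ of $\E\colon B(\Ha)\to\N$ and put $\omega:=V\rho V^{*}$. The crucial structural point is tensorization: the trace preserving conditional expectation onto $\N^{\ten n}$ is $\E^{\ten n}$, a Stinespring dilation of which is $V^{\ten n}$ (after the canonical reshuffling of legs $A_1E_1\cdots A_nE_n\simeq A^nE^n$), and $V^{\ten n}\rho^{\ten n}(V^{\ten n})^{*}=\omega^{\ten n}$. Applying Theorem \ref{theorem: duality} to $\N^{\ten n}\subseteq B(\Ha^{\ten n})$ therefore gives, for every $n$,
\[D_{\max}^\eps(\rho^{\ten n}\|\N^{\ten n})=-H_{\min}^\eps(E^n|A^n)_{\omega^{\ten n}},\qquad D_{\min}^\eps(\rho^{\ten n}\|\N^{\ten n})=-H_{\max}^\eps(E^n|A^n)_{\omega^{\ten n}}.\]
Dividing by $n$ and letting $n\to\infty$, the quantum AEP for conditional entropies \eqref{eq:qaep}, applied to the i.i.d.\ state $\omega^{\ten n}$ under the bipartition $E|A$, shows that both limits equal $-H(E|A)_\omega$. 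To identify this number I would invoke Theorem \ref{theorem: duality}(ii) at $\alpha=1$ (the limit $\alpha\to1$), giving $-H(E|A)_\omega=D(\rho\|\N)$. This settles the $D_{\max}^\eps$ and $D_{\min}^\eps$ limits.

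For $D_H^\eps$ I would sandwich it between the two quantities just computed. Standard one-shot inequalities (\cite{tomamichel-book}) give, for suitable $\eps_1,\eps_2\in(0,1)$ and constants $c,c'$ independent of $n$,
\[D_{\min}^{\eps_1}(\rho^{\ten n}\|\tau)-c\le D_H^\eps(\rho^{\ten n}\|\tau)\le D_{\max}^{\eps_2}(\rho^{\ten n}\|\tau)+c'\]
for every fixed alternative $\tau$. Minimizing $\tau$ over $\S(\N^{\ten n})$, using that this minimization commutes with the smoothing for $D_{\max}$ and the one-sided minimax inequality $\inf_\tau D_{\min}^{\eps_1}(\cdot\|\tau)\ge D_{\min}^{\eps_1}(\cdot\|\N^{\ten n})$ for $D_{\min}$, yields
\[D_{\min}^{\eps_1}(\rho^{\ten n}\|\N^{\ten n})-c\le D_H^\eps(\rho^{\ten n}\|\N^{\ten n})\le D_{\max}^{\eps_2}(\rho^{\ten n}\|\N^{\ten n})+c'.\]
Since the AEP just established holds for every smoothing parameter in $(0,1)$ with the same limit $D(\rho\|\N)$, dividing by $n$ and sending $n\to\infty$ squeezes $\frac1n D_H^\eps(\rho^{\ten n}\|\N^{\ten n})$ to $D(\rho\|\N)$. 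The upper bound can alternatively be read off directly from the ordinary Stein's lemma \eqref{gstein} applied to the i.i.d.\ alternative $\E(\rho)^{\ten n}\in\S(\N^{\ten n})$, using $D(\rho\|\E(\rho))=D(\rho\|\N)$.

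The genuinely hard analysis is outsourced to Theorem \ref{theorem: duality} and to the conditional AEP \eqref{eq:qaep}, so the main obstacle here is getting the tensorization exactly right: one must verify that the Stinespring dilation of $\E^{\ten n}$ really is $V^{\ten n}$ and that the dilated state is precisely the i.i.d.\ state $\omega^{\ten n}$ with the clean bipartition $E^n|A^n$, since this is what licenses a direct application of the bipartite i.i.d.\ conditional AEP. A secondary point of care is that only the favorable direction of the minimax inequality is available in the $D_H^\eps$ lower bound, so it must be invoked in that direction.
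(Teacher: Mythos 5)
Your proposal is correct and follows essentially the same route as the paper: reduce $D_{\max}^{\eps}$ and $D_{\min}^{\eps}$ to conditional entropies of $(V\rho V^*)^{\ten n}$ via the dilation theorem (using that $V^{\ten n}$ dilates $\E^{\ten n}=\E_{\N^{\ten n}}$), invoke the bipartite conditional AEP, and then squeeze $D_H^{\eps}$ between smoothed one-shot quantities. The only cosmetic difference is in the lower bound on $D_H^{\eps}$: you use a $D_{\min}^{\eps_1}$-based inequality, whereas the paper uses the bound $D_{\max}^{\sqrt{1-\eps}}(\cdot\|\cdot)\le D_H^{\eps}(\cdot\|\cdot)+\log\frac{1}{1-\eps}$ from Anshu et al.; both work precisely because the AEP has been established for every fixed smoothing parameter in $(0,1)$, and the paper obtains the matching upper bound from the two-state Stein's lemma with alternative $\E(\rho)^{\ten n}$, which you also mention as an option.
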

\begin{proof}
It is known (see \cite[Corollary 6.24, 6.25]{tomamichel-book}) that 
\begin{align*}\lim_{n\to \infty}\frac{1}{n}H_{\min}^\eps (E^n|A^n)_{\omega^{\otimes n}}=\lim_{n\to \infty}\frac{1}{n}H_{\max}^\eps (E^n|A^n)_{\omega^{\otimes n}}=H(E|A)_\omega.
\end{align*}



Now from Theorem \ref{theorem: duality} we have the identity for  $\rho^{\ten n}$,
\[ D_{\max}^\eps(\rho^{\ten n}\| \N^{\ten n})=- H_{\min}^\eps (E^n|A^n)_{(V\rho V^*)^{\ten n}}
\]



Hence we have for any $\eps\in (0,1)$
\begin{align*}
\lim_{n\to \infty}\frac{1}{n}D_{\max}^{\eps}(\rho^{\otimes n} \| \N^{\otimes n})= \lim_{n\to \infty}-\frac{1}{n} H_{\min}^{\eps} (E^{ n} |A^{n})_{(V\rho V^*)^{\otimes n}}= -H(E|A)_{(V\rho V^*)}= D(\rho\|\N),
\end{align*}
where the last equality is again from the duality Theorem \ref{theorem: duality} for standard relative entropy $D$ ($\al=1$ in the $D_\al$ case). Hence we get \begin{equation}\label{eq-max-AEP}
    \lim_{n\to \infty}\frac{1}{n}D_{\max}^{\eps}(\rho^{\otimes n} \| \N^{\otimes n})=D(\rho\|\N).
\end{equation}

\noindent The argument for $D_{\min}^\eps$ is similar (via $H_{\max}^{\eps}$). For hypothesis testing relative entropy, we use the following relation (see \cite{anshu}) between $D_{\max}^{\eps}$ and $D_{H}^{\eps}$, for $\eps\in (0,1)$ , 
\begin{align}D_{\max}^{\sqrt{1-\eps}}(\rho\|\sigma)\leq D_{H}^{\eps}(\rho\|\sigma)+\log \frac{1}{\eps}\pl. 
\label{eq:3}
\end{align}
Now taking minimum over all states in $\N$ we have
\[D_{\max}^{\sqrt{1-\eps}}(\rho\|\N)\leq D_{H}^{\eps}(\rho\|\N)+\log \frac{1}{\eps}.\]
Now, for any $\eps\in(0,1)$, using equation (\ref{eq-max-AEP}) we have 
\begin{align*}&\liminf_{n\to \infty}\frac{1}{n} D_{H}^{\eps}(\rho^{\otimes n}\|\N^{\otimes n})
\geq \lim_{n\to \infty}\frac{1}{n}   (D_{\max}^{\sqrt{1-\eps}}(\rho^{\otimes n}\|\N^{\otimes n})-\log \frac{1}{\eps})
= D(\rho\|\N).
\end{align*}
On the other hand,
\begin{align*}\limsup_{n\to\infty} \frac{1}{n} D_{H}^{\eps}(\rho^{\otimes n}\|\N^{\otimes n})\leq \lim_{n\to\infty}\frac{1}{n} D_{H}^{\eps}(\rho^{\otimes n}\|\E(\rho)^{\otimes n})=D(\rho\|\E(\rho))=D(\rho\|\N),
\end{align*}
where the inequality above follows from the definition of $D_{H}^{\eps}(\rho^{\otimes n}\|\N^{\otimes n})$ and the  next equality follows from the Stein's lemma of two states (\cite{HP, ON}).
That completes the proof.
\end{proof}


\subsection{Second-order asymptotics}\label{subsec:second order}
One advantage of our dilation theorem is that it implies the second-order asymptotics, which is beyond the reach of generalized Stein's lemma for general convex sets of states.  

For a state $\rho$ and a positive semidefinite operator $\sigma$, the following second-order asymptotics was proved by Hayashi and Tomamichel \cite{tomamichel2013hierarchy} (see also \cite{Ke-Li}),
\begin{align}\label{eq:dmax2} &D^\eps_{\max}(\rho^{\ten n}||\sigma^{\ten n})=n D(\rho||\sigma)-\sqrt{n}\sqrt{V(\rho||\sigma)}\Phi^{-1}(\eps^2)+O(\log n)\pl, \\
&D^\eps_{H}(\rho^{\ten n}||\sigma^{\ten n})\nonumber=n D(\rho||\sigma)+\sqrt{n}\sqrt{V(\rho||\sigma)}\Phi^{-1}(\eps)+O(\log n).\nonumber
\end{align}
Here
\[V(\rho||\sigma):=\tr(\rho(\log \rho-\log\sigma)^2)-D(\rho||\sigma)^2\]
is the quantum relative variance and $\Phi^{-1}$ is the inverse function of the cumulative distribution function of a standard normal random variable $\Phi$ defined as follows
\begin{align*}& \Phi(a)=\frac{1}{\sqrt{2\pi}}\int_{-\infty}^a e^{-\frac{x^2}{2}}dx\pl,\pl \Phi^{-1}(\eps)=\sup\{ a\in \mathbb{R} | \Phi(a)\le \eps\}\pl. 
\end{align*}
The term $O(\log n)$ hides constants that depend on $\rho, \sigma$ and $\eps$. 
Note that in this paper we use the purified distance $d(\rho,\rho')\le \eps$ for the smoothing parameter $\eps$, following \cite{tomamichel2013hierarchy}. The choice of the smoothing criterion does not affect the first order but matters for the second order. Nuradha and Wilde \cite{nuradha2024fidelity} proved the second-order asymptotics for $D_{\min}^\eps$,  
\begin{align}\label{eq:dmin2}D^\eps_{\min}(\rho^{\ten n}||\sigma^{\ten n})
=n D(\rho||\sigma)+\sqrt{n}\sqrt{V(\rho||\sigma)}\Phi^{-1}(\eps^2)+O(\log n)\pl. \end{align}
The equations \eqref{eq:dmax2} and \eqref{eq:dmin2} immediately imply the second-order asymptotics for the alternative smoothing condition for the min-entropy and the max-entropy without optimization over $B$,
\begin{align*}
&\tilde{H}_{\min}(A|B)_{\rho}=-D_{\max}(\rho_{AB}||{\bf 1}_A\ten \rho_B)\pl,\pl  \tilde{H}_{\min}^\eps(A|B)_{\rho}=-D_{\max}^\eps(\rho_{AB}|| {\bf 1}_A\ten \rho_B)\\
&\tilde{H}_{\max}(A|B)_{\rho}=-D_{\min}(\rho_{AB}|| {\bf 1}_A\ten \rho_B)\pl ,\pl  \tilde{H}_{\max}^\eps(A|B)_{\rho}=-D_{\min}^\eps(\rho_{AB}||{\bf 1}_A\ten \rho_B), \\
&\tilde{H}^\eps_{h}(A|B)_{\rho}=-D_{H}^{\eps}(\rho_{AB}|| {\bf 1}_A\ten \rho_B)\pl ,
\end{align*}
Using duality relations, one can further derive the second-order expansion for the smooth conditional min entropy with optimization.
\begin{prop}\label{prop:duality1}For a bipartite state $\rho_{AB}\in \mathcal{B}(\mathcal{H}_A\ten \mathcal{H}_B)$,
\begin{align*} H^\eps_{\min}(A^n|B^n)_{\rho^{\otimes n}}
&=n H(A|B)_\rho+
\sqrt{n}\sqrt{V(\rho_{AB}||{\bf 1}_A\ten \rho_B)}\Phi^{-1}(\eps^2)+O(\log n),\\
H^\eps_{\max}(A^n|B^n)_{\rho^{\otimes n}}
&=n H(A|B)_\rho-\sqrt{n}\sqrt{V(\rho_{AB}|| {\bf 1}_A\ten \rho_B)}\Phi^{-1}(\eps^2)+O(\log n)\pl, \\
H^\eps_{h}(A^n|B^n)_{\rho^{\otimes n}}&=n H(A|B)_\rho-\sqrt{n}\sqrt{V(\rho_{AB}|| {\bf 1}_A\ten \rho_B)}\Phi^{-1}(\eps)+O(\log n)\pl.
\end{align*}
\end{prop}
The proof of this proposition uses the duality of var-entropy in the following lemma. Note also that the lemma below is a special case of Eq. (3.15) in \cite{HT}.
\begin{lemma}
For a pure state $\rho_{ABC}\in \mathcal{B}(\mathcal{H}_A\ten \mathcal{H}_B\ten \mathcal{H}_C)$,
\[ V(\rho_{AB}|| {\bf 1}\ten \rho_B)=V(\rho_{AC}|| {\bf 1}\ten \rho_C)\pl. \]
\end{lemma}
\begin{proof}
By definition,
\begin{align*}
V(\rho_{AB}||{\bf 1}_A\ten \rho_B)&=\tr(\rho_{AB}(\log \rho_{AB}
-\log {\bf 1}_A\ten \rho_B)^2)-D(\rho_{AB}||{\bf 1}_A\ten \rho_B)^2
\end{align*}
For the second term, by the duality $H(A|B)_\rho=-H(A|C)_\rho$ for pure state
\begin{align*} &D(\rho_{AB}||{\bf 1}_A\ten \rho_B)^2=H(A|B)_\rho^2=H(A|C)_\rho^2=D(\rho_{AC}||{\bf 1}_A\ten \rho_C)^2 \end{align*}
For the first term, we use the fact for a bipartite pure state $\ket{\phi_{AB}}$, for any function $f$,
\[ f(\phi_A)\ket{\phi_{AB}}=f(\phi_B)\ket{\phi_{AB}}\pl. \]
Indeed, assume the Schmidt decomposition $\ket{\phi_{AB}}= \sum_{i}\sqrt{\la_{i}}\ket{i_A}\ket{i_B}$. Then $\phi_A=\sum_{i}\la_i\ketbra{i_A}, \phi_B=\sum_{i}\la_i\ketbra{i_B}$
\[ f(\phi_A)\ket{\phi_{AB}}=\sum_{i} f(\la_i)\sqrt{\la_i}\ket{i_A}\ket{i_B}=f(\phi_B)\ket{\phi_{AB}}  .\]
Applying this property to the pure state $\rho_{ABC}$, we have
\begin{align*} \log\rho_{AB}\ket{\rho_{ABC}}=\log\rho_{C}\ket{\rho_{ABC}}\pl, \log\rho_{B}\ket{\rho_{ABC}}=\log\rho_{AC}\ket{\rho_{ABC}}
\end{align*}
Hence
\begin{align*}
&\tr(\rho_{AB}(\log \rho_{AB}-\log {\bf 1}_A\ten \rho_B)^2)
\\ =& \bra{\rho_{ABC}}(\log \rho_{AB}-\log({\bf 1}_A\ten \rho_B))^2\ket{\rho_{ABC}}
\\ =& \bra{\rho_{ABC}}(\log \rho_{AB})^2+(\log({\bf 1}_A\ten \rho_B))^2-\log \rho_{AB}\cdot \log({\bf 1}_A\ten \rho_B)-\log({\bf 1}_A\ten \rho_B)\cdot \log \rho_{AB}\ket{\rho_{ABC}}
\\ =& \bra{\rho_{ABC}}(\log \rho_{C})^2+(\log \rho_{AC})^2-\log \rho_{C}\cdot \log  \rho_{AC}-\log \rho_{AC}\cdot \log \rho_{C}\ket{\rho_{ABC}}
\\ =& \bra{\rho_{ABC}}(\log \rho_{C}-\log \rho_{AC})^2 \ket{\rho_{ABC}}
\\ =& \tr(\rho_{AC}(\log \rho_{AC}-\log  \rho_C)^2)
\end{align*}
That finishes the proof.
\end{proof}

\begin{proof}[Proof of Proposition \ref{prop:duality1}]
By definition 
\begin{align*}&\tilde{H}^\eps_{\min}(A|B)_{\rho}\le {H}_{\min}^\eps(A|B)_{\rho}\pl, 
\tilde{H}_{\max}^\eps(A|B)_{\rho}\le {H}_{\max}^\eps(A|B)_{\rho}\pl,\\
&\tilde{H}_{h}^\eps(A|B)_{\rho}\le {H}_{h}^\eps(A|B)_{\rho}.
\end{align*}
On one hand, using the second-order asymptotics of $\tilde{H}_{\min}(A|B)$, we have
\begin{align*}
{H}_{\min}^\eps(A^n|B^n)_{\rho^{\otimes n}}\ge &\tilde{H}_{\min}^\eps(A^n|B^n)_{\rho^{\otimes n}} =n H(A|B)_\rho + \sqrt{n}\sqrt{V(\rho_{AB}||{\bf 1}\ten \rho_B)}\Phi^{-1}(\eps^2)+\mathcal{O}(\log n).
\end{align*}
On the other hand, by duality over the purification $\rho_{ABC}$
 \begin{align*} H(A|B)_\rho=-H(A|C)_\rho, V(\rho_{AB}||{\bf 1}\ten \rho_B)=V(\rho_{AC}||{\bf 1}\ten \rho_C)\pl. 
\end{align*}
 So we have
 \begin{align*}
{H}_{\min}^\eps(A^n|B^n)_{\rho^{\otimes n}}=&-{H}_{\max}^\eps(A^n|C^n)_{\rho^{\otimes n}}
\le -\tilde{H}_{\max}^\eps(A^n|C^n)_{\rho^{\otimes n}}\\
 =&-n H(A|C)_\rho+ \sqrt{n}\sqrt{V(\rho_{AC}||{\bf 1}\ten \rho_C)}\Phi^{-1}(\eps^2)+\mathcal{O}(\log n)
\\ =&n H(A|B)_\rho+ \sqrt{n}\sqrt{V(\rho_{AB}||{\bf 1}\ten \rho_B)}\Phi^{-1}(\eps^2)+\mathcal{O}(\log n).
 \end{align*}
 This proves the asymptotics for ${H}_{\min}^\eps(A^n|B^n)_{\rho^{\otimes n}}$ and by duality also for ${H}_{\max}^\eps(A^n|B^n)_{\rho^{\otimes n}}$. For hypothesis testing conditional entropy, on one hand, we have, from the second-order expansion for two states,
 \begin{align*}
 &{H}_{h}^\eps(A^n|B^n)_{\rho^{\otimes n}}
 \ge \tilde{{H}}_{h}^\eps(A^n|B^n)_{\rho^{\otimes n}}  =n H(A|B)_\rho- \sqrt{n}\sqrt{V(\rho_{AB}||{\bf 1}\ten \rho_B)}\Phi^{-1}(\eps)
 +\mathcal{O}(\log n).
 \end{align*}
On the other hand, by the inequality \eqref{eq:3} \begin{align*}&{H}_{h}^\eps(A^n|B^n)_{\rho^{\otimes n}}
\le  {H}_{\min}^{\sqrt{1-\eps}}(A^n|B^n)_{\rho^{\otimes n}}+\log \frac{1}{\varepsilon} \\ =&n H(A|B)_\rho- \sqrt{n}\sqrt{V(\rho_{AB}||{\bf 1}\ten \rho_B)}\Phi^{-1}(\eps)
+\mathcal{O}(\log n) \pl.  \end{align*}
That completes the proof.
\end{proof}

\begin{theorem}\label{max-AEP-second order}Let $\N\subseteq B(\Ha_A)$ be a subalgebra and let $V: \Ha_A\rightarrow \Ha_A\otimes \Ha_E$ be a Stinespring isometry for the trace-preserving conditional expectation $\E: B(\Ha_A)\rightarrow \N$ such that
$\E(x)=V^*(x\otimes {\bf 1}_E)V$. Then for any  state $\rho\in B(\Ha_A)$ and for any $\eps\in (0,1)$, we have
\begin{align*}
D_{\max}^\eps (\rho^{\ten n}||\N^{\ten n})=&nD(\rho||\N)-\sqrt{n}\sqrt{V(\rho||\E(\rho))}\Phi^{-1}(\eps^2)
+\mathcal{O}(\log n)\pl, \\
D_{\min}^\eps (\rho^{\ten n}||\N^{\ten n})=&n D(\rho||\N)+\sqrt{n}\sqrt{V(\rho||\E(\rho))}\Phi^{-1}(\eps^2)+\mathcal{O}(\log n)\pl, \\
D_{H}^\eps (\rho^{\ten n}||\N^{\ten n})=&n D(\rho||\N)+\sqrt{n}\sqrt{V(\rho||\E(\rho))}\Phi^{-1}(\eps)+\mathcal{O}(\log n)\pl.
\end{align*}
\end{theorem}
\begin{proof}
Using the dilation Theorem \ref{theorem: duality} for $n$-copy, 
\begin{align*} &D_{\max}^\eps (\rho^{\otimes n}||\N^{\otimes n})=-H_{\min}^\eps(E^n|A^n)_{(V\rho V^*)^{\otimes n}}\pl ,D_{\min}^\eps (\rho^{\otimes n}||\N^{\otimes n})=-H_{\max}^\eps(E^n|A^n)_{(V\rho V^*)^{\otimes n}}\pl ,\pl \\
& D_{\alpha} (\rho^{\otimes n}||\N^{\otimes n})=-H_{\alpha}(E^n|A^n)_{(V\rho V^*)^{\otimes n}}\pl,\pl D_{H}^\eps (\rho^{\otimes n}||\N^{\otimes n})=-H_{h}^\eps(E^n|A^n)_{(V\rho V^*)^{\otimes n}}.
\end{align*}
It suffices to justify the second-order term that for any state $\rho\in \mathcal{B}(\Ha_A)$
\[ V(\rho||\E(\rho))= V(V\rho V^*|| \E(\rho) \ten {\bf 1}_E) \pl.\]
On one hand, it was proved in \cite[Proposition 11]{HT} that for any bipartite state $\sigma_{AB}$,
\[\frac{1}{2}V(\sigma_{AB}|| {\bf 1}\ten \sigma_B)=-\frac{d}{d\alpha }H_\al(A|B)_\sigma|_{\alpha=1}\pl.\]
Then by the dilation of sandwiched $\alpha$-R\'enyi relative entropy $D_\alpha$ we have,
\begin{align*}
&\frac{d}{d\alpha }D_\alpha(\rho||\N )|_{\alpha=1}=-\frac{d}{d\alpha }H_\al(E|A)_{V\rho V^*}|_{\alpha=1}=\frac{1}{2}V(V\rho V^*||\E(\rho) \ten {\bf 1}_E) \pl,
\end{align*}
where $\E(\rho)=\tr_{E}(V\rho V^*)$ is the reduced density of $V\rho V^*$ on $A$, and we have used the fact that $\E=\E^{\dagger}$ (see equation \eqref{eq-selfdual}). 
On the other hand, for $\alpha\ge \frac{1}{2}$, denote \[\sigma_\alpha=\arg\min_{\sigma\in\S(\N)} D_\alpha(\rho||\sigma )\pl.\]
Then
\begin{align*} \frac{d}{d\alpha }D_\alpha(\rho||\N )|_{\alpha=1}
=&\frac{d}{d\alpha }\Big(\min_{\sigma\in\S(\N) }D_\alpha(\rho||\sigma )\Big)|_{\alpha=1}
= \frac{d}{d\alpha }D_\alpha(\rho||\sigma_\alpha )|_{\alpha=1}\\
\overset{(1)}{=} &\frac{d}{d\alpha }D_\alpha(\rho||\E(\rho) )|_{\alpha=1}
\\ \overset{(2)}{=}&\frac{1}{2}V(\rho||\E(\rho))
\end{align*}
Here (1) uses \cite[Lemma 24]{HT} and the fact $D_\al(\rho||\sigma)$ is strictly convex with respect to $\sigma$ for all $1/2<\alpha<\infty$, and (2) uses \cite[Theorem 2]{lin2015investigating}.
This completes the proof.
\end{proof}

\begin{remark}{\rm By the dilation argument above, we also obtain the second-order expansion of $\alpha\mapsto D_\al(\rho||\N)$,
\begin{align*}D_\al(\rho||\N)=D(\rho||\N)+\frac{(\al-1)}{2}V(\rho||\E(\rho))+\mathcal{O}(|\al-1|^2)\pl.\end{align*} }
\end{remark}

\section{Resource theory of subalgebra coherence}\label{resource theory}

\subsection{Generalized coherence}
The Stein's lemma and AEP above suggest that one can consider a resource theory whose free states are the states contained in a fixed subalgebra. Let $\Ha$ be a Hilbert space of dimension $d$ and let $\N\subseteq B(\Ha)$ be a von Neumann subalgebra. Before introducing the generalized coherence, we recall the Pimsner-Popa index of a von Neumann subalgebra \cite{pimsner-popa}, which was motivated by Jones index theory for subfactors \cite{Jones}.

\begin{definition}\label{Popa-pimsner}
Let $\E_\N:B(\Ha)\to \N$ be the trace-preserving conditional expectation. We define
\[
\lambda_\N:=\lambda(B(\Ha):\N)=\max\{\lambda>0: \lambda x\leq \E_\N(x),\ \forall x\in B(\Ha)_+\}.
\]
Here $B(\Ha)_+$ denotes the cone of positive semidefinite operators on $\Ha$.
\end{definition}

This index was introduced mainly in subfactor theory, but here we restrict ourselves to the finite-dimensional setting. By the structure theorem for finite-dimensional unital $^*$-algebras, after a unitary conjugation we may write
\[
\N=\bigoplus_k \bigl(\M_{m_k}(\C)\otimes 1_{n_k}\bigr),
\qquad \text{dim}(\mathcal{H})=\sum_k m_kn_k=:d .
\]
With this convention, \cite[Theorem 6.1]{pimsner-popa} gives
\[
\lambda_\N^{-1}=\sum_k \min\{m_k,n_k\}\, n_k .
\]
In particular, $\lambda_\N\leq 1$, with equality if and only if $\N=B(\Ha)$. For the diagonal subalgebra, $\lambda_\N=1/d$; if $B(\Ha)=\M_m(\C)\otimes\M_n(\C)$ and $\N=\M_m(\C)\otimes 1_n$, then $\lambda_\N=1/(\min\{m,n\}n)$.

We say that a quantum state $\rho\in S(\Ha)$ is \textbf{incoherent with respect to $\N$}, or simply \textbf{$\N$-incoherent}, if $\rho\in\S(\N):=\N\cap S(\Ha)$. The subalgebra entropies discussed in the previous sections can be viewed as $\N$-coherence measures. Indeed, for $\mathbb D=D_{\max},D_{\min}$ and $D_\alpha$ with $\alpha\in(1/2,\infty)$, one has $\mathbb D(\rho\|\N)=0$ if and only if $\rho\in \S(\N)$. It is proved in \cite{GJL19} that these $\N$-coherence measures have the same maximum:
\[
\max_{\rho\in S(\Ha)}D_\alpha(\rho\|\N)
=\max_{\rho\in S(\Ha)}D_\alpha(\rho\|\E_\N(\rho))
=\log\lambda_\N^{-1} .
\]
The key observation, already implicit in Pimsner and Popa's work, is that for any propert finite-dimensional inclusion $\N\subset B(\Ha)$ there exists a projection $p_\N$ such that
\[
\E_\N(p_\N)=\lambda_\N f_\N,
\]
where $f_\N\in\N$ is another projection. The normalized flat state
\[
\omega_\N:=\frac{p_\N}{\tr(p_\N)}
\]
then $\omega_\N$ attains the maximal value, that is,  $D_\alpha(\omega_\N\|\N)=\log\lambda_\N^{-1}$ for all $\alpha\in[1/2,\infty]$. We call such a state a \textbf{maximally $\N$-coherent state}.

For example, when $\N=\operatorname{span}\{|j\rangle\langle j|:j=1,\ldots,d\}$ is the diagonal subalgebra, one may take $\omega_\N=|\phi\rangle\langle\phi|$ with $|\phi\rangle=d^{-1/2}\sum_{i=1}^d|i\rangle$, and the dephasing map gives
\[
\E_\N(|\phi\rangle\langle\phi|)=\sum_j |\langle j|\phi\rangle|^2 |j\rangle\langle j|=\frac1d 1 .
\]
In general, $\E_\N(\omega_\N)$ need only be proportional to a projection, not to the identity. For instance, if $B(\Ha)=\M_m(\C)\otimes\M_n(\C)$, $\N=\M_m(\C)\otimes 1_n$, and $m>n$, then one may take $\omega_\N=|\psi\rangle\langle\psi|$ with $|\psi\rangle=n^{-1/2}\sum_{i=1}^n |i\rangle|i\rangle$. The trace-preserving conditional expectation onto $\N$ is then the partial trace map
\[
\E_\N(\omega_\N)= \frac{1}{n}\id\otimes \tr_{n}(\omega_\N)\otimes 1_n=\frac{1}{n^2}\sum_{i=1}^n |i\rangle\langle i|\otimes 1_n,
\]
which is flat on its support but not full rank when $m>n$.

\subsection{Incoherent operations}
We consider quantum systems $\Ha$ equipped with distinguished subalgebras $\N\subset B(\Ha)$. For two such systems $\N\subset B(\Ha)$ and $\M\subset B(\Ka)$, let $\E_\N$ and $\E_\M$ denote the corresponding trace-preserving conditional expectations.

\begin{definition}
Let $\Phi:B(\Ha)\to B(\Ka)$ be a quantum channel.
\begin{enumerate}
\item[i)] We say that $\Phi$ is an \textbf{$\N$-$\M$ maximally incoherent operation (MIO)} if $\Phi(\S(\N))\subseteq\S(\M)$. Equivalently, $\Phi$ is MIO if
\[
\Phi\circ\E_\N=\E_\M\circ\Phi\circ\E_\N .
\]
\item[ii)] We say that $\Phi$ is an \textbf{$\N$-$\M$ dephasing-covariant incoherent operation (DIO)} if
\[
\Phi\circ\E_\N=\E_\M\circ\Phi .
\]
\end{enumerate}
When the subalgebras are clear from context, we simply call such maps MIO or DIO operations.
\end{definition}

An immediate property of MIO operations is monotonicity:
\[
D_\alpha(\rho\|\N)\geq D_\alpha(\Phi(\rho)\|\M),\qquad \alpha\in[1/2,\infty].
\]
This follows from the data-processing inequality for $D_\alpha$ and  the inclusion $\Phi(\S(\N))\subseteq\S(\M)$. Every DIO operation is MIO. For example, the conditional expectation $\E_\N:B(\Ha)\to\N$ is the analogue of the completely dephasing map and is both $\N$-$\N$ MIO and $\N$-$\N$ DIO. We do not discuss analogues of incoherent operations (IO) or strictly incoherent operations (SIO), although such notions can also be defined in the present subalgebra setting.

\subsection{Dilution with respect to subalgebra coherence}
For each inclusion $\N\subset B(\Ha)$, fix a maximally $\N$-coherent state $\omega_\N=p_\N/\tr(p_\N)$ as above. We now introduce the subalgebra coherence cost, in analogy with the coherence cost in \cite{dilution}. In this subsection, the smoothing condition is the fidelity condition $F(\rho,\rho')\geq 1-\eps$.

\begin{definition}
Let $\rho\in S(\Ha)$ and let $\N\subseteq B(\Ha)$ be a subalgebra. Given $\eps\in[0,1]$, define the \textbf{one-shot $\eps$-$\N$-coherence cost} under operations $\mathcal O$ by
\[
C_{\N,\mathcal O}^\eps(\rho)
:=\inf_{(\M,\omega_\M)}\inf_{\Phi\in\mathcal O}
\left\{\log\lambda_\M^{-1}: F(\rho,\Phi(\omega_\M))\geq 1-\eps\right\},
\]
where $\lambda_\M=\lambda(B(\Ka):\M)$, the first infimum ranges over all pairs $(\M, \omega_\M)$ where $\M\subseteq B(\Ka)$ is a subalgebra and $\omega_\M$ a maximally $\M$-coherent state, and the second infimum ranges over all $\M$-$\N$ incoherent quantum operations $\Phi:B(\Ka)\to B(\Ha)$ in the class $\mathcal O$.
The asymptotic $\N$-coherence cost is
\[
C_{\N,\mathcal O}^\infty(\rho)
=\lim_{\eps\to0}\limsup_{n\to\infty}\frac1n C_{\N,\mathcal O}^\eps(\rho^{\otimes n}).
\]
\end{definition}

The definition depends on the chosen maximally $\M$-coherent state, but the following theorem shows that this choice does not affect the asymptotic MIO cost.

\begin{theorem}\label{thm 1: dilution}
Let $\rho\in S(\Ha)$ and let $\N\subseteq B(\Ha)$ be a subalgebra. For every $\eps\in[0,1]$,
\[
D_{\max}^\eps(\rho\|\N)\leq C_{\N,MIO}^\eps(\rho)\leq D_{\max}^\eps(\rho\|\N)+1,
\]
where the smoothing in $D_{\max}^\eps$ is by the fidelity condition $F(\rho,\rho')\geq1-\eps$. Consequently,
\[
C_{\N,MIO}^\infty(\rho)=D(\rho\|\N).
\]
\end{theorem}

\begin{proof}
The proof is similar to the case of coherence \cite{dilution}. For the converse bound, fix $\delta>0$ and choose a feasible triple $(\M,\omega_\M,\Phi)$ such that
\[
\log\lambda_\M^{-1}\leq C_{\N,MIO}^\eps(\rho)+\delta,
\qquad F(\rho,\Phi(\omega_\M))\geq1-\eps .
\]
Set $\rho'=\Phi(\omega_\M)$. Then
\begin{align*}
D_{\max}^\eps(\rho\|\N)
&\leq D_{\max}(\rho'\|\N)\\
&\leq D_{\max}(\Phi(\omega_\M)\|\Phi(\E_\M(\omega_\M)))\\
&\leq D_{\max}(\omega_\M\|\E_\M(\omega_\M))\\
&=\log\lambda_\M^{-1}
\leq C_{\N,MIO}^\eps(\rho)+\delta .
\end{align*}
Here the second line uses that $\Phi(\E_\M(\omega_\M))\in\S(\N)$, and the third line uses data processing. Letting $\delta\downarrow0$ gives the lower bound.

For achievability, choose $\rho'$ and $\sigma\in\S(\N)$ such that $F(\rho,\rho')\geq1-\eps$ and
\[
D_{\max}^\eps(\rho\|\N)=D_{\max}(\rho'\|\sigma)=\log\mu .
\]
If $\mu=1$, then $\rho'=\sigma\in\S(\N)$ and the cost is zero. Thus assume $\mu>1$ and set $n=\lceil\mu\rceil\ge2$. Let $\K=\C^n$, let $\M\subset B(\K)$ be the diagonal subalgebra, and let
\[
\omega_\M=|\phi_n\rangle\langle\phi_n|,
\qquad |\phi_n\rangle=\frac1{\sqrt n}\sum_{i=1}^n |i\rangle,
\]
so that $\E_\M(\omega_\M)=1_n/n$. Define
\[
\Phi(x)=\frac{n}{n-1}\bigl(\tr(x)-\tr(\omega_\M x)\bigr)\left(\sigma-\frac1n\rho'\right)+\tr(\omega_\M x)\rho'.
\]
Since $\rho'\leq\mu\sigma\leq n\sigma$, the operator $\sigma-\rho'/n$ is positive; hence $\Phi$ is completely positive. A direct trace calculation shows that $\Phi$ is trace-preserving. If $\gamma\in\S(\M)$, then $\tr(\omega_\M\gamma)=1/n$, and therefore $\Phi(\gamma)=\sigma\in\S(\N)$. Thus $\Phi$ is an $\M$-$\N$ MIO operation. Also $\Phi(\omega_\M)=\rho'$. Hence
\[
C_{\N,MIO}^\eps(\rho)\leq\log n\leq \log(1+\mu)\leq 1+\log\mu
=1+D_{\max}^\eps(\rho\|\N).
\]
The asymptotic statement follows from the AEP for $D_{\max}^\eps$ after translating between the fidelity and purified-distance smoothing parameters.
\end{proof}

\begin{remark}{\rm 
The strong AEP gives a strong-converse form: for each fixed fidelity error $\eps\in(0,1)$,
\[
\lim_{n\to\infty}\frac1n C_{\N,MIO}^\eps(\rho^{\otimes n})=D(\rho\|\N).
\]
In the achievability proof above we used the standard diagonal coherence resource to dilute an arbitrary $\N$-coherent state. One can use any maximally coherent state $\omega_\M$ with respect to a chosen subalgebra.}
\end{remark}

For DIO operations, define
\[
D_{\max,\E_\N}(\rho):=D_{\max}(\rho\|\E_\N(\rho))
=\log\min\{\lambda>0:\rho\leq\lambda\E_\N(\rho)\},
\]
and, with fidelity smoothing,
\[
D_{\max,\E_\N}^\eps(\rho):=
\inf_{\rho':F(\rho,\rho')\ge1-\eps}D_{\max,\E_\N}(\rho').
\]
Clearly $D_{\max,\E_\N}(\rho)\geq D_{\max}(\rho\|\N)$ and $D_{\max,\E_\N}^\eps(\rho)\geq D_{\max}^\eps(\rho\|\N)$.

\begin{theorem}\label{thm:DIO-dilution}
Let $\rho\in S(\Ha)$ and let $\N\subseteq B(\Ha)$ be a subalgebra with trace-preserving conditional expectation $\E_\N$. For every $\eps\in[0,1]$,
\[
D_{\max,\E_\N}^\eps(\rho)\leq C_{\N,DIO}^\eps(\rho)
\leq D_{\max,\E_\N}^\eps(\rho)+1.
\]
\end{theorem}

\begin{proof}
The converse is analogous to the proof of Theorem \ref{thm 1: dilution}. Fix $\delta>0$ and choose a feasible DIO protocol with
$\log\lambda_\M^{-1}\le C_{\N,DIO}^\eps(\rho)+\delta$. Let $\rho'=\Phi(\omega_\M)$. Then
\begin{align*}
D_{\max,\E_\N}^\eps(\rho)
&\leq D_{\max,\E_\N}(\rho')\\
&=D_{\max}(\Phi(\omega_\M)\|\E_\N(\Phi(\omega_\M)))\\
&=D_{\max}(\Phi(\omega_\M)\|\Phi(\E_\M(\omega_\M)))\\
&\leq D_{\max}(\omega_\M\|\E_\M(\omega_\M))\\
=&\log\lambda_\M^{-1}\\
&\le C_{\N,DIO}^\eps(\rho)+\delta .
\end{align*}
Letting $\delta\downarrow0$ gives the lower bound.

For the upper bound, choose $\rho'$ such that $F(\rho,\rho')\ge1-\eps$ and
\[
D_{\max,\E_\N}^\eps(\rho)=D_{\max,\E_\N}(\rho')=\log\lambda,
\qquad \rho'\leq\lambda\E_\N(\rho').
\]
If $\lambda=1$, the cost is zero. Otherwise set $n=\lceil\lambda\rceil\ge2$, take the diagonal subalgebra $\M\subset B(\C^n)$, and let $\omega_\M=|\phi_n\rangle\langle\phi_n|$. Define
\[
\Phi(x)=\frac{n}{n-1}\bigl(\tr(x)-\tr(\omega_\M x)\bigr)
\left(\E_\N(\rho')-\frac1n\rho'\right)+\tr(\omega_\M x)\rho'.
\]
Since $\rho'\leq\lambda\E_\N(\rho')\leq n\E_\N(\rho')$, the map is completely positive, and it is trace-preserving by direct calculation. For every state $\gamma\in S(\C^n)$, $\tr(\omega_\M\E_\M(\gamma))=1/n$, whence
\[
\Phi(\E_\M(\gamma))=\E_\N(\rho').
\]
On the other hand, for every state $x\in S(\C^n)$,
\[
\E_\N(\Phi(x))=\E_\N(\rho').
\]
By linearity, $\E_\N\circ\Phi=\Phi\circ\E_\M$, so $\Phi$ is an $\M$-$\N$ DIO operation. Since $\Phi(\omega_\M)=\rho'$, we get
\[
C_{\N,DIO}^\eps(\rho)\leq\log n\leq1+\log\lambda
=1+D_{\max,\E_\N}(\rho')=1+D_{\max,\E_\N}^\eps(\rho).
\]
This concludes the proof.
\end{proof}

\section{Outlook}
In this paper we establish the generalized Stein's lemma for the state space of subalgebras. This motivated us to look deeper into the resource theory of coherence in this general setup. We  provided an operational meaning to the subalgebra entropy in the context of resource dilution. Our result can be used in the analyses of quantum information processing tasks that exploit coherence as resources, such as quantum key distribution and random number generation. For future directions, we would like to focus on the coherence distillation in particular, and establish the asymptotic coherence distillation results. Also, our approach allows us to extend the framework of the current work in the paradigm of infinite-dimensional setting, which we leave as a future endeavor. Indeed, since the index theory for subalgebras has been a widely popular topic in operator algebras, we expect to extend some of our results in a more general context.

\end{document}